\newtheorem{thm}{Theorem}
\newtheorem{lem}{Lemma}
\newtheorem{defn}{Definition}
\begin{document}

\title{Radio Resource Allocation for Multicarrier-Low Density Spreading Multiple Access}

\author{Mohammed~Al-Imari,~\IEEEmembership{Member,~IEEE,} Muhammad~Ali~Imran,~\IEEEmembership{Senior~Member,~IEEE,} and~Pei~Xiao,~\IEEEmembership{Senior~Member,~IEEE}%
\thanks{M. Al-Imari is with Samsung R\&D Institute UK, Staines-upon-Thames, Surrey, TW18 4QE, UK (e-mail: m.al-imari@samsung.com).}%
\thanks{M. Imran and P. Xiao are with the Institute for Communication Systems, home of the 5G Innovation Centre, University of Surrey, Guildford, Surrey, GU2 7XH, UK (e-mail: \{m.imran, p.xiao\}@surrey.ac.uk).}%
}

\markboth{}%
{Shell \MakeLowercase{\textit{et al.}}: Bare Demo of IEEEtran.cls for Journals}

\maketitle
%%%%%%%%%%%%%%%%%%%%%%%%%%%%%%%%%%%%%%%%%%%%%%%%%%%%%%%%%%%%%%%%%%%%%%%%%%%

\begin{abstract}
Multicarrier-low density spreading multiple access (MC-LDSMA) is a promising multiple access technique that enables near optimum multiuser detection. In MC-LDSMA, each user's symbol spread on a small set of subcarriers, and each subcarrier is shared by multiple users. The unique structure of MC-LDSMA makes the radio resource allocation more challenging comparing to some well-known multiple access techniques. In this paper, we study the radio resource allocation for single-cell MC-LDSMA system. Firstly, we consider the single-user case, and derive the optimal power allocation and subcarriers partitioning schemes. Then, by capitalizing on the optimal power allocation of the Gaussian multiple access channel, we provide an optimal solution for MC-LDSMA that maximizes the users' weighted sum-rate under relaxed constraints. Due to the prohibitive complexity of the optimal solution, suboptimal algorithms are proposed based on the guidelines inferred by the optimal solution. The performance of the proposed algorithms and the effect of subcarrier loading and spreading are evaluated through Monte Carlo simulations. Numerical results show that the proposed algorithms significantly outperform conventional static resource allocation, and MC-LDSMA can improve the system performance in terms of spectral efficiency and fairness in comparison with OFDMA.
\end{abstract}
\begin{IEEEkeywords}
Low density spreading, uplink communications, radio resource allocation, fairness.
\end{IEEEkeywords}

\IEEEpeerreviewmaketitle
%=====Section 1: Introduction =======%
\section{Introduction}\label{SectionOne}
The need for ubiquitous coverage and the increasing demand for high data rate services keep constant pressure on the mobile network infrastructure. There has been intense research to improve the system spectral efficiency and coverage, and a significant part of this effort focused on developing and optimizing the multiple access techniques. One such promising technique that been recently proposed is the low density spreading (LDS), which intelligently manages the multiple access interference to offer efficient and low complexity multiuser detection (MUD)~\cite{Tse_LDS}. It was shown that the low density spreading multiple access (LDSMA) enables near optimal MUD using belief propagation with affordable complexity~\cite{Tse_LDS,YoshidaTanaka_SSCDMA,RaymondSaad_SSCDMA,GuoWang_SSCDMA}. The LDSMA concept has been extended to multicarrier communication to cope with the multipath channel effect~\cite{FerThesis}. In multicarrier-LDSMA (MC-LDSMA), each symbol is spread over a small number of the available subcarriers, and each subcarrier is shared by more than one user. The main focus in the literature of LDS, was on the evaluation of the asymptotic behaviour of the belief propagation algorithm and optimization of the system's bit error rate (BER) performance~\cite{Tse_LDS,YoshidaTanaka_SSCDMA,RaymondSaad_SSCDMA,GuoWang_SSCDMA}.

Given the limited spectrum availability, radio resource allocation is essential for optimizing the performance of mobile communication systems. Most of existing work on radio resource allocation for uplink scenario can be divided into two categories. The first category deals with resource allocation for generic multiple access channel (MAC), where only the users' power constraints are considered~\cite{Verdu_ISI,Tse_Polymatroid,WeiYu_Jour}. In the second category, specific multiple access schemes such as orthogonal frequency division multiple access (OFDMA) and time division multiple access are considered. Comparing to the generic MAC, in the specific multiple access schemes, extra constraints need to be satisfied in the resource allocation problem. In general, these constraints make the allocation problem non-linear and non-convex, which motivates suboptimal solutions, and can be resolved by analysing the optimality conditions~\cite{NgSung_SPA_UL,HuangBerry_SPA_UL,EURASIP_OFDMA} or based on the game theory approach~\cite{YPW_Auction,HJL_Nash_J,Noh_Game}. Although, there has been a lot of work on the radio resource allocation, the existing schemes cannot be directly applied to the MC-LDSMA system under question.

The LDS structure imposes constraints on the number of users that share the same subcarrier and the number of subcarriers used for spreading each symbol, which make the radio resource allocation problem more challenging. We have proposed a heuristic algorithm for radio resource allocation without considering the spreading to simplify the problem~\cite{MyPaper_VTC11}. In this paper, we consider the radio resource allocation for MC-LDSMA by taking into account the spreading, which significantly change the structure of the problem. We will consider the radio resource allocation for two cases; single-user and multiuser. The contributions of this paper can be summarized as follows. i)~A single-user optimal power allocation scheme that consists of maximum ratio transmission (MRT) and water-filling is proposed. ii)~An optimal subcarrier partitioning scheme for the proposed power allocation is derived. iii)~Assuming continuous frequency selective channels, an optimal power allocation for MC-LDSMA is derived under certain relaxed conditions. iv)~Using the single-user power allocation algorithm and the insights gained from the multiuser optimal solution, two suboptimal subcarrier and power allocation algorithms are proposed. v)~The effects of subcarrier loading, effective spreading factor and power allocation are investigated through Monte Carlo simulations.\\
The rest of the paper is organized as follows. In Section~\ref{SectionTwo}, background information on the optimal power allocation for frequency selective MAC and the MC-LDSMA system model are presented. The optimal power allocation and subcarriers partitioning schemes for the single-user case are introduced in Section~\ref{SectionThree}. In Section~\ref{SectionFour}, the optimality conditions of radio resource allocation for the multiuser case in addition to suboptimal algorithms are provided. In Section~\ref{SectionFive}, we evaluate the performance of the proposed algorithms in single-user and multiuser scenarios using Monte Carlo simulations. Finally, concluding remarks are drawn in~Section~\ref{SectionSix}. Table~\ref{table_ListSyms} summarizes the frequently used mathematical symbols in this paper.
%=======Section 2=======%
\section{Background and System Model}\label{SectionTwo}
\begin{table}[!t]
\caption{List of Symbols.}
\centering 
\begin{tabular}{c||l} 
\hline
Symbol & Description\\
\hline
$\mathcal{K}$ & Set of users\\ 

$K$ & Total number of users\\ 

$\mathcal{N}$ & Set of subcarriers\\ 

$N$ & Total number of subcarriers\\

$\mathcal{N}_k$ & Set of subcarriers allocated to user $k$\\

$N_k$ & Number of subcarriers allocated to user $k$\\ 

$\mathcal{M}_k$ & Set of subcarriers' subsets/groups of user $k$\\

$M_k$ & Number of symbols/subcarriers' subsets of user $k$\\ 

$P_k$ & Maximum power of user $k$\\

$w_k$ & Weight assigned to user $k$\\ 

$R_k$ & Rate of user $k$\\

$\lambda_k$ & Lagrange multiplier of the $k$th user's power constraint\\ 

$\mathcal{I}_k$ & Set of users that interfere with the $k$th user\\

$q$ & Interference from users not cancelled yet\\

$\sigma^2$ & Noise power spectral density\\

$\pi(.)$ & Permutation\\ 

$d_c$ & Number of users at each subcarrier in LDS\\

$d_v$ & Effective spreading factor in LDS\\

$\mathcal{D}_{k,m}$ & Set of subcarriers to spread symbol $m$ of user $k$ in LDS\\

$R_{k,m}$ & $k$th user rate on the $m$th subset of subcarriers in LDS\\

$u_{k,m}$ & $k$th user utility on the $m$th subset of subcarriers in LDS\\

$u_{k,n}$ & $k$th user utility on the $n$th subcarrier\\

$u_k(q,f)$ & $k$th user utility on the frequency $f$ with interference $q$\\

$\mathcal{B}$ & Set of users allocated the same weight/priority\\

$\bar{w}_l$ & Weight assigned to the $l$th group of users $(\mathcal{B}_l)$\\

$\mathcal{S}_f$ \& $\mathcal{S}_n$ & Set of active users at frequency $f$ \& subcarrier $n$\\ 

$\mathcal{P}_k(f)$ \& $p_{k,n}$ & $k$th user power at frequency $f$ \& subcarrier $n$\\

$x_k(f)$/$x_{k,n}$ & Frequency/subcarrier allocation index for user $k$\\ 

$\alpha_k(f)$ \& $\alpha_{k,n}$ & $k$th user channel response at frequency $f$ \& subcarrier $n$\\

$h_k(f)$ \& $h_{k,n}$ & $k$th user channel gain at frequency $f$ \& subcarrier $n$\\
\hline
\end{tabular}
\label{table_ListSyms} 
\end{table}
The uplink scenario can be represented by a multiple access channel, where a set of users $\mathcal{K}=\{1, \dots, K\}$ transmit to a single base-station in the presence of additive Gaussian noise over frequency selective channels. As the channel frequency responses are not flat, it is expected that dynamic power allocation will improve the system performance. Considering perfect channel knowledge at the transmitter, the goal is to allocate the optimal transmit power for each user that maximizes the weighted sum-rate. The weights are used to prioritize the users as a fairness mechanism. The users are subject to individual maximum power constraints, $\mathbf{P}=[P_1, \dots, P_K]$. The capacity region of frequency selective MAC is given by\vspace{-0.1cm}
\begin{multline}\label{Eq_MAC_CapArea}
\hspace{-0.5cm}\bigcup_{\mathcal{P}\in\mathcal{F}}\Bigg\{\textbf{R}:\hspace{-0.1cm}\sum_{k\in \mathcal{S}} R_k\leq\hspace{-0.15cm}\int_{-W/2}^{W/2}\log \left(1+\frac{1}{\sigma^2}\displaystyle\sum_{k\in \mathcal{S}} \mathcal{P}_k(f) h_k(f)\right)df,\\ 
\forall \mathcal{S}\subset \mathcal{K}\Bigg\},
\end{multline}
where $\mathcal{S}$ is any subset of the total set of the users $\mathcal{K}$, and $\mathcal{F}$ is the set of feasible power allocation policies that satisfy the power constraint
\begin{equation}
\mathcal{F}\equiv \left\{\mathcal{P}:\int_{-W/2}^{W/2} \mathcal{P}_k(f)df\leq P_k, \forall k\right\}.
\end{equation}
Here, $W$ is the total bandwidth and $\mathcal{P}$ is a power allocation policy such that $\mathcal{P}_k(f)$ is the power user $k$ allocates at frequency $f$, and $h_k(f)=|\alpha_k(f)|^2$, where $\alpha_k(f)$ is the $k$th user channel frequency response at frequency $f$ and $\sigma^2$ is the noise power spectral density. This capacity area has $K!$ vertices in the positive quadrant, and each vertex is achievable by a successive decoding using one of the possible $K!$ distinct permutations of the set $\mathcal{K}$. The vertex $\mathbf{R}_\pi$, which corresponds to permutation $\pi(.)$, is given by
\begin{equation}
R_{\pi(k)}=\int_{-W/2}^{W/2} \log\Bigg( 1+\frac{\mathcal{P}_{\pi(k)}(f) h_{\pi(k)}(f)}{\sigma^2+\displaystyle\sum_{i=k+1}^{K}\mathcal{P}_{\pi(i)}(f) h_{\pi(i)}(f)}\Bigg)df.
\end{equation}
The problem of finding the optimum power allocation can be formulated as follows.\\
\textbf{Problem MAC}: \textit{Weighted sum-rate maximization for frequency selective MAC}
\begin{equation}\label{Eq_MACfs_Obj} 
\max_{\mathcal{P}}\quad \sum_{k\in \mathcal{K}} w_{\pi(k)}R_{\pi(k)},
\end{equation}
subject to
\begin{equation}\label{Eq_MACfs_PowerConst}
\int_{-W/2}^{W/2} \mathcal{P}_k(f)df\leq P_k,\quad \forall k\in \mathcal{K},
\end{equation}
where $w_k$ is the weight associated with user $k$. By exploiting the Polymatroid properties of the capacity region~\eqref{Eq_MAC_CapArea} and using the Lagrange multipliers, Tse and Hanly~\cite{Tse_Polymatroid} have shown that optimal power allocation can be carried out by optimization at each frequency, and the optimal solution for a given $\boldsymbol{\lambda}=[\lambda_1, \dots, \lambda_K]$ is given by
\begin{multline}
\sum_{k\in\mathcal{K}} w_{\pi(k)} \log\left( 1+\frac{\mathcal{P}^\star_{\pi(k)}(f) h_{\pi(k)}(f)}{\sigma^2+\displaystyle\sum_{i=k+1}^{K}\mathcal{P}^\star_{\pi(i)}(f) h_{\pi(i)}(f)}\right)\\-\sum_{k\in\mathcal{K}} \lambda_k \mathcal{P}^\star_k(f)= \int_0^\infty u^\star(q,f)dq,
\end{multline}
where
\begin{equation}\label{Eq_UtilityMax}
u^\star(q,f)=\Big[\max_{k\in\mathcal{K}}u_k(q,f)\Big]^+,
\end{equation}
\begin{equation}
u_k(q,f)=\frac{w_k}{\sigma^2+q}-\frac{\lambda_k}{ h_k(f)}.
\end{equation}
Here, $u_k(q,f)$ is the utility of user $k$ on frequency $f$, and $q$ is the residual interference due to the users that are not yet cancelled. $\lambda_k$ is the Lagrange multiplier associated with the power constraint of user $k$. The optimal power allocation can be obtained via a greedy algorithm, which is summarized as follows~\cite{MyPaper_VTC11}. Let $\sigma^2+q$ represents the current ``interference level'' due to the noise and received power of users not yet cancelled. Starting with received power $q=0$, the optimal solution can be obtained by adding a user, at each interference level, that leads to the largest increase in the objective function. The decoding order that achieves the optimal solution is in increasing order of the users' weights $w_{\pi(1)}\leq w_{\pi(2)}\dots \leq w_{\pi(K)}$. It is worth mentioning that the solution is only valid for continuous channel gain or when the number of discrete channel gains reaches infinity~\cite{Tse_Polymatroid}.

For MAC, users' total transmit power $(P_k)$ is the only constraint in the optimization problem. For special multiple access schemes, an additional constraint is imposed on the optimization problem to control the number of users that share the same frequency. Let $\mathcal{S}_f$ be the set of active users at frequency $f$, $\mathcal{S}_f=\{k: \mathcal{P}_k(f)>0\}$. As a special multiple access scheme, in MC-LDSMA the condition, $\vert \mathcal{S}_f\vert\leq d_c$, must be satisfied. On the other hand, in OFDMA $\vert \mathcal{S}_f\vert\in\{0, 1\}$ must be satisfied. As will be shown in Section~IV, adding such a constraint to the optimization problem makes the problem non-convex. However, it will be proven that MC-LDSMA can be a special case of the generic MAC with certain conditions on the users' weights, and the power allocation algorithm for MAC can be used to find the optimal power allocation for MC-LDSMA. Therefore, the constraint will be fulfilled without explicitly imposing it in the optimization problem.

\begin{figure}[!t] 
\centering
\includegraphics[width=8cm]{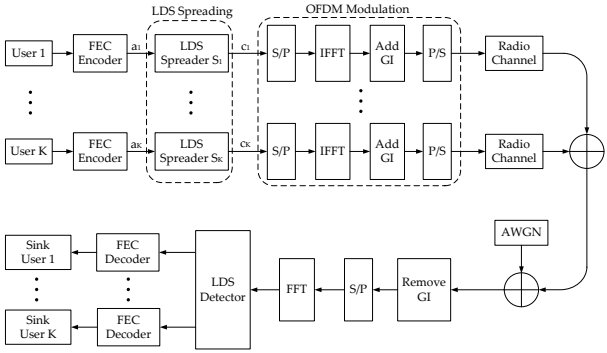}
\caption{Uplink MC-LDSMA block diagram.}
\label{Figure1_LDSbd}
\end{figure}

Now, the system model of a single-cell MC-LDSMA will be presented, which can also be found in~\cite{MyPaper_IWCMC}. The conceptual block diagram of an uplink MC-LDSMA system is depicted in Fig.~\ref{Figure1_LDSbd}. The total frequency band is divided into a set of subcarriers $\mathcal{N}=\{1, \dots, N\}$. Let $\textbf{a}_k$ be a data vector of user $k$ consisting of $M_k$ modulated data symbols and denoted as $\textbf{a}_k = [a_{k,1}, a_{k,2}, \dots, a_{k,M_k}]^T$. The signature matrix $\textbf{S}_k$ assigned for the $k$th user consists of $M_k$ LDS signatures, i.e. $\textbf{S}_k = [\textbf{s}_{k,1}, \textbf{s}_{k,2} \dots, \textbf{s}_{k,M_k}]$. Each LDS signature, $\textbf{s}_{k,m}\in\mathbb{C}^{N\times1}$, is a sparse vector consisting of $N$ chips among which only $d_v$ chips have non-zero values, where $d_v$ is the effective spreading factor. Each data symbol $a_{k,m}$ will be spread using the $m$th spreading sequence. Let $\textbf{c}_k = [c_{k,1}, c_{k,2}, \dots, c_{k,N}]^T$ denotes the chips vector belonging to user $k$ after the spreading process, which is given by $\textbf{c}_k = \textbf{S}_k \textbf{a}_k$.

At the receiver side, users' signals that are using the same subcarrier will be superimposed. However, the number of users interfere at each subcarrier $(d_c)$ is much less than the total number of users, i.e. $d_c\ll K$. After performing OFDM demodulation operation at the receiver, the received signal at subcarrier $n$ is given by
\begin{equation}\label{Eq_LDS_RecSign}
y_n = \sum_{k\in\mathcal{S}_n} \alpha_{k,n}c_{k,n} + v_n,
\end{equation}
where $v_n$ is the additive white Gaussian noise (AWGN) and $\mathcal{S}_n$ is the set of users that use subcarrier $n$. This signal is passed to LDS MUD to separate users' symbols using close to optimum chip-level iterative MUD based on message passing algorithm. More details about the LDS receiver can be found in~\cite{Tse_LDS}.
%%%%%%%%%%%%%%%%%%%%%%%%%%%%%%%%%%%%%%%%%%%%%%%%%%%%%%%%%%%%%%%%%%%%%%%%%%%%%%%%%%%%%%
\section{Single-User Radio Resource Allocation}\label{SectionThree}
In this section, the single-user power allocation with LDS will be considered. The power allocation for LDS systems is more challenging compared with the non-spreading case, which has a well-known water-filling solution. Assuming that a set of subcarriers, $\mathcal{N}_k=\{1, \dots, N_k\}$, are allocated to a user, the user will partition this set of subcarriers into $M_k$ subsets/groups indexed by $\mathcal{M}_k=\{1, \dots, M_k\}$. Each subset of subcarriers will be used to spread one symbol, and the total number of transmitted symbols will be $M_k=N_k/d_v$. Let the $m$th subset of subcarriers denoted by $\mathcal{D}_{k,m}$, where $|\mathcal{D}_{k,m}|=d_v$, is used to spread the $m$th symbol, then the rate achieved on that subset is given by
\begin{equation}\label{Eq_SymRate}
R_{k,m}=\log{\left(1+\left(\sum_{n\in\mathcal{D}_{k,m}} \sqrt{p_{k,n}g_{k,n}}\right)^2\right)},\quad\forall m\in\mathcal{M}_k,
\end{equation}
where $g_{k,n}=\frac{h_{k,n}}{\sigma^2W/N}$. Note that~\eqref{Eq_LDS_RecSign} represents the received signal from all the users in each subcarrier before the LDS detector (belief propagation algorithm). It is worth mentioning that in multiple access techniques that don't use spreading, such as OFDMA, the information/symbols on each subcarrier are independent, hence, the rate achieved on a subset of subcarriers is obtained by a sum of the rate achieved on each subcarrier. However, this is not the case for spreading techniques, like MC-LDSMA. With spreading the same information (modulation symbol) is repeated on a set of subcarriers. Thus, the rate achieved on a set of subcarriers will be a function of the effective signal to noise ratio (SNR) on the subcarrier. Furthermore, in the optimization problem, we consider the effective SNR on the subcarriers that are used by each user/symbol after the LDS detector, which is represented in~\eqref{Eq_SymRate}.

The single-user power allocation problem for LDS can be split into two parts: Firstly, what is the optimal subcarriers partitioning that maximizes the user rate? Secondly, for a given subcarriers partitioning, what is the optimal power allocation? These two issues will be addressed in the sequel.
\subsection{Power Allocation}
For a given subcarriers partitioning scheme, the power allocation problem can be formulated as
\begin{equation}\label{Eq_SU_MaxObj}
\max_{p_{k,n}} \sum_{m\in\mathcal{M}_k}\log{\left(1+\left(\sum_{n\in\mathcal{D}_{k,m}} \sqrt{p_{k,n} g_{k,n}}\right)^2\right)},
\end{equation}
subject to
\begin{equation}
\sum_{m\in\mathcal{M}_k}\sum_{n\in\mathcal{D}_{k,m}} p_{k,n} = P_k,\quad \text{and}\quad p_{k,n}\geq 0.
\end{equation} 
To find the optimal power allocation, first the power allocation within each symbol is considered then the power allocation over the symbols is optimized. Let $\bar{p}_{k,m}=\sum_{n\in\mathcal{D}_{k,m}}p_{k,n}$ be the power allocated to symbol $m$. The optimal power allocation of the symbol power, $\bar{p}_{k,m}$, over the symbol's subcarriers, $\mathcal{D}_{k,m}$, is given by maximum ratio transmission (MRT)
\begin{equation}\label{Eq_MRTPowerAllocation}
p_{k,n}=\frac{g_{k,n}}{\sum_{n\in\mathcal{D}_{k,m}} g_{k,n}}\ \bar{p}_{k,m},\quad \forall n\in\mathcal{D}_{k,m}.
\end{equation}
The MRT is optimum in the sense it maximizes the received SNR of the symbol~\cite[Chapter 7]{WC_GS_Book}. Consequently, by substituting~\eqref{Eq_MRTPowerAllocation} in the objective function~\eqref{Eq_SU_MaxObj}, the optimization problem will be
\begin{equation}\label{Eq_SymbolsRate}
\max_{\bar{p}_{k,m}} \sum_{m\in\mathcal{M}_k}\log{(1+ \bar{p}_{k,m}\bar{g}_{k,m})},
\end{equation} 
subject to
\begin{equation}\label{Eq_SymPowerConst}
\sum_{m\in\mathcal{M}_k} \bar{p}_{k,m} = P_k,\quad \text{and}\quad \bar{p}_{k,m}\geq 0,
\end{equation}
where
\begin{equation}\label{Eq_EffecSymGain}
\bar{g}_{k,m}=\displaystyle\sum_{n\in\mathcal{D}_{k,m}}g_{k,n},
\end{equation}
which can be solved by single-user water-filling algorithm. The water-filling solution can be expressed as follows
\begin{equation}\label{Eq_SymbolWaterFilling}
\bar{p}^\star_{k,m}=\left[\frac{1}{\lambda_k}-\frac{1}{\bar{g}_{k,m}}\right]^+,
\end{equation}
where $[x]^+=\max (0,x)$ and $\lambda_k$ is the Lagrange multiplier that should satisfy the power constraint in~\eqref{Eq_SymPowerConst}. Hence, the single-user power allocation for MC-LDSMA will be conducted in two steps: the first step is water-filling to find the power for each symbol; the second step is MRT of the symbol power over the symbol's subcarriers. This power allocation algorithm will be referred to as MRT-WF.
\subsection{Subcarriers Partitioning}
In the MC-LDSMA technique, the allocated subcarriers to a user, $\mathcal{N}_k$, should be partitioned into $M_k$ subsets, each of which will be used to transmit one symbol. In the previous section, the optimal power allocation scheme for a given subcarriers partitioning has been derived. In this section, we investigate how to optimize the partitioning of available subcarriers. In conventional LDS codes, the subcarriers are partitioned randomly. The random partitioning has shown satisfactory BER performance~\cite{GuoWang_SSCDMA,FerThesis}. However, the partitioning scheme has not yet been investigated from the user rate optimization perspective. Here, the subcarriers partitioning will be studied under the objective of user rate maximization. The problem of partitioning the subcarriers to maximize the user rate can be formulated as follows
\begin{equation}\label{Eq_SU_PartOptiProb}
\max_{\mathcal{D}_{k,m}\in\mathcal{N}_k} \sum_{m\in\mathcal{M}_k}R_{k,m}(\mathcal{D}_{k,m}),
\end{equation}
subject to
\begin{equation}
\mathcal{D}_{k,m} \cap\mathcal{D}_{k,j}=\emptyset,\quad \forall m\neq j,\ m, j\in\mathcal{M}_k,
\end{equation}
\begin{equation}\label{Eq_SU_CardConst}
|\mathcal{D}_{k,m}|=d_v, \quad\forall m\in\mathcal{M}_k.
\end{equation}
This is a combinatorial optimization problem with a large search space. The number of possible LDS codes to be generated from the $N_k$ subcarriers for a specific spreading factor $d_v$ is given by
\begin{equation}
C_{LDS}=\frac{1}{2}{2d_v\choose d_v}\sum_{i=0}^{M_k-3} {N_k-1-i d_v\choose d_v-1}.
\end{equation}
In fact, $C_{LDS}$ represents the cardinality of the feasible search space of problem~(\ref{Eq_SU_PartOptiProb}~-~\ref{Eq_SU_CardConst}). In order to see how large is the search space, let us consider $N_k=32$ and $d_v=2$. In this case, the number possible LDS codes will be $C_{LDS}=1.92\times 10^{17}$. Apparently, brute-force search is infeasible for practical systems, and partitioning schemes with low complexity need to be sought. Even though the subcarriers partitioning problem is non-convex, there is an underlying Schur-convex structure, which can be utilized to solve the problem.

Let $\mathbf{g}_k=[\bar{g}_{k,1}, \dots, \bar{g}_{k,M_k}]$ represents the vector of symbols' gains for a given subcarrier partitioning scheme. Using the MRT-WF allocation and by substituting the optimal power~\eqref{Eq_SymbolWaterFilling} in the user rate~\eqref{Eq_SymbolsRate}, the user rate can be formulated as a function of the symbol gains as follows
\begin{equation}\label{Eq_SymbolsRate_2}
R_k(\mathbf{g}_k)= \sum_{m\in\mathcal{M}_k}\log{\left(\frac{\bar{g}_{k,m}}{\lambda_k}\right)},
\end{equation}
where $\frac{1}{\lambda_k}$ is the water-level and it is given by
\begin{equation}\label{Eq_WaterLevel_1}
\frac{1}{\lambda_k}=\frac{1}{M_k}\left(P_k+\sum_{m\in\mathcal{M}_k}\frac{1}{\bar{g}_{k,m}}\right).
\end{equation}
Based on this new formulation of the user rate, the following theorem can be introduced to link between the majorization of the symbol gains and the user rate with MRT-WF power allocation.
\begin{thm}\label{ThrmRateSchur}
The user rate $R_k(\mathbf{g}_k)$ is Schur-convex function of the symbol gains vector $\mathbf{g}_k$, and
\begin{equation}
\mathbf{g}_k\succ\hat{\mathbf{g}}_k\quad \Rightarrow\quad R_k(\mathbf{g}_k)\geq R_k(\hat{\mathbf{g}}_k),
\end{equation}
if $P_k<M_k^3-M_k^2-M_k$, where $\mathbf{x}\succ\mathbf{y}$ means $\mathbf{x}$ majorize $\mathbf{y}$.
\end{thm}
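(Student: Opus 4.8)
The plan is to show that the user rate $R_k(\mathbf{g}_k)$ satisfies the two defining properties of a Schur-convex function: symmetry in the arguments $\bar{g}_{k,1},\dots,\bar{g}_{k,M_k}$, and compliance with the Schur–Ostrowski criterion. Symmetry is immediate from \eqref{Eq_SymbolsRate_2}: both the sum of logarithms and the water-level expression \eqref{Eq_WaterLevel_1} are invariant under any permutation of the symbol gains. The substantive work is therefore the Schur–Ostrowski condition, which states that a symmetric, differentiable function is Schur-convex on its domain if and only if
\begin{equation}\label{Eq_SchurOstrowski}
(\bar{g}_{k,i}-\bar{g}_{k,j})\left(\frac{\partial R_k}{\partial \bar{g}_{k,i}}-\frac{\partial R_k}{\partial \bar{g}_{k,j}}\right)\geq 0,\quad \forall i,j\in\mathcal{M}_k.
\end{equation}

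\emph{First} I would compute the partial derivative of $R_k$ with respect to a single symbol gain $\bar{g}_{k,i}$. Since $\frac{1}{\lambda_k}$ depends on every $\bar{g}_{k,m}$ through \eqref{Eq_WaterLevel_1}, this is an implicit differentiation: writing $R_k = \sum_m \log \bar{g}_{k,m} - M_k\log\lambda_k$, one finds $\frac{\partial R_k}{\partial \bar{g}_{k,i}} = \frac{1}{\bar{g}_{k,i}} + M_k\lambda_k\,\frac{\partial (1/\lambda_k)}{\partial \bar{g}_{k,i}}$, and from \eqref{Eq_WaterLevel_1} the water-level derivative is $\frac{\partial(1/\lambda_k)}{\partial \bar{g}_{k,i}} = -\frac{1}{M_k}\frac{1}{\bar{g}_{k,i}^2}$. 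Collecting terms gives a clean closed form for each partial derivative in terms of $\bar{g}_{k,i}$ and the common multiplier $\lambda_k$.

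\emph{Next} I would form the difference of the two partial derivatives appearing in \eqref{Eq_SchurOstrowski} and factor out $(\bar{g}_{k,i}-\bar{g}_{k,j})$. Because $\lambda_k$ is symmetric in the gains, its contribution to the \emph{difference} of derivatives carries the common factor cleanly, so the product on the left of \eqref{Eq_SchurOstrowski} reduces to $(\bar{g}_{k,i}-\bar{g}_{k,j})^2$ times a scalar expression involving $\lambda_k$, the two gains, and $M_k$. The squared factor is automatically nonnegative, so Schur-convexity holds precisely when that remaining scalar is nonnegative. Imposing this nonnegativity is where the hypothesis $P_k < M_k^3 - M_k^2 - M_k$ must enter: I expect the scalar to be bounded below by a quantity controlled by the water-level $1/\lambda_k$, and using \eqref{Eq_WaterLevel_1} to bound $1/\lambda_k$ in terms of $P_k$ (together with worst-case bounds on the $\bar{g}_{k,m}$) will convert the sign requirement into exactly the stated polynomial inequality in $M_k$.

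\textbf{The main obstacle} will be this final sign-tracking step. The scalar multiplier is not unconditionally positive — it depends on how large the water-level $1/\lambda_k$ can grow, which in turn scales with $P_k$. The delicate part is choosing the right bounds: one must uniformly control the cross-terms $\bar{g}_{k,i}\bar{g}_{k,j}$ and the water-level contribution so that the worst case over admissible gain vectors still yields nonnegativity, and to recover the \emph{specific} threshold $M_k^3 - M_k^2 - M_k$ rather than a looser sufficient condition, the estimates cannot be wasteful. The implication $\mathbf{g}_k \succ \hat{\mathbf{g}}_k \Rightarrow R_k(\mathbf{g}_k) \geq R_k(\hat{\mathbf{g}}_k)$ then follows as the standard consequence of Schur-convexity, closing the argument.
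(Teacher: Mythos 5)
Your route---applying the Schur--Ostrowski criterion directly to $R_k$ as a function of the coordinates of $\mathbf{g}_k$---is genuinely different from the paper's, and your calculus is correct as far as it goes: from \eqref{Eq_SymbolsRate_2} and \eqref{Eq_WaterLevel_1} one indeed gets $\partial R_k/\partial\bar{g}_{k,i}=1/\bar{g}_{k,i}-\lambda_k/\bar{g}_{k,i}^2$, and the criterion's product factors exactly as you predicted,
\begin{equation*}
(\bar{g}_{k,i}-\bar{g}_{k,j})\left(\frac{\partial R_k}{\partial \bar{g}_{k,i}}-\frac{\partial R_k}{\partial \bar{g}_{k,j}}\right)=\frac{(\bar{g}_{k,i}-\bar{g}_{k,j})^2}{\bar{g}_{k,i}\bar{g}_{k,j}}\left[\lambda_k\left(\frac{1}{\bar{g}_{k,i}}+\frac{1}{\bar{g}_{k,j}}\right)-1\right].
\end{equation*}
The genuine gap is in the step you yourself flagged as the main obstacle. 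The required nonnegativity, $\lambda_k\bigl(1/\bar{g}_{k,i}+1/\bar{g}_{k,j}\bigr)\geq 1$, is by \eqref{Eq_WaterLevel_1} equivalent to
\begin{equation*}
P_k\leq M_k\left(\frac{1}{\bar{g}_{k,i}}+\frac{1}{\bar{g}_{k,j}}\right)-\sum_{m\in\mathcal{M}_k}\frac{1}{\bar{g}_{k,m}}\quad\text{for every pair }i\neq j,
\end{equation*}
and no worst-case bounding converts this into $P_k<M_k^3-M_k^2-M_k$. Under the unit-sum normalization of the gains that the paper implicitly uses, the right-hand side evaluated at the equal-gains point equals $M_k^2$, which is strictly \emph{smaller} than $M_k^3-M_k^2-M_k$ for every $M_k\geq 3$ (only $M_k=2$ escapes this); for unequal gains the binding pair is the two largest gains, and letting a third gain become small drives the right-hand side down further, even below zero. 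Since Schur--Ostrowski is a necessary and sufficient characterization, the failure of your condition at, e.g., $M_k=3$, $P_k=10<15$ with equal gains means the route cannot be completed: a direct check at that point shows the perturbation $(1/3+\epsilon,\,1/3,\,1/3-\epsilon)$, which majorizes the equal vector, strictly decreases $R_k$. Your plan stalls precisely where you anticipated trouble, and sharper estimates cannot repair it.

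The paper sidesteps (rather than solves) this difficulty by a different decomposition: it rewrites $R_k$ as a function of the elementary symmetric polynomials $S_{M_k}(\mathbf{g}_k)=\prod_m\bar{g}_{k,m}$ and $S_{M_k-1}(\mathbf{g}_k)$, uses that both are Schur-concave, and demands that the \emph{sum} $\partial R_k/\partial S_{M_k}+\partial R_k/\partial S_{M_k-1}$ be negative, which gives $P_k<(M_k-1)S_{M_k-1}/S_{M_k}-M_k$; minimizing the Schur-convex ratio $S_{M_k-1}/S_{M_k}$ at the equal-gains point then produces \eqref{Eq_PowerBound}. Observe that this is a directional-derivative argument along the single direction $(1,1)$ in the $(S_{M_k},S_{M_k-1})$ plane, whereas a majorizing perturbation does not move the two polynomials by equal amounts; that is why the paper's binding case (equal gains) and threshold are incompatible with the exact pointwise criterion you set up, whose binding case is spread gains. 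In short, your approach is the standard and stricter one, but it provably cannot recover the stated bound; reproducing the theorem's threshold requires the paper's symmetric-polynomial reformulation and the weaker, sufficiency-style argument it rests on.
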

\begin{IEEEproof}
See Appendix~\ref{App_RateScurProof}
\end{IEEEproof}
Theorem~\ref{ThrmRateSchur} shows that the user rate~$R_k(\mathbf{g}_k)$ will be increased by majorizing the symbols' gains vector $\mathbf{g}_k$.
After showing the effect of symbols gains majorization on the user rate, the subcarrier partitioning can thus be derived based on this finding.
\begin{lem}\label{Lem_SC_Partitiong}
The optimal subcarrier partitioning that maximizes the user rate with MRT-WF power allocation is given by
\begin{equation}\label{Eq_GreedyScPartition}
\hspace{-0.18cm}\mathcal{D}^\star_{k,m}=\{g_{k,\pi(\hat{m})}, g_{k,\pi(\hat{m}+1)}, \dots,\ g_{k,\pi(\hat{m}+d_v-1)}\},\ \forall m\in\mathcal{M}_k,
\end{equation}
where $\hat{m} = (m-1)d_v+1$, and $\pi$ represents a permutation of the subcarrier gains in a descending order such that $g_{k,\pi(1)}\geq g_{k,\pi(2)}\geq\dots \geq g_{k,\pi(N_k)}$.
\end{lem}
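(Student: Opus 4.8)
The plan is to leverage Theorem~\ref{ThrmRateSchur} directly: since $R_k(\mathbf{g}_k)$ is Schur-convex in the symbol-gain vector, maximizing the user rate is equivalent to finding the partition whose induced symbol-gain vector $\mathbf{g}_k=[\bar{g}_{k,1},\dots,\bar{g}_{k,M_k}]$ is majorized by no other feasible vector --- i.e. the partition that produces the \emph{most spread-out} (maximally majorizing) gain vector. Recall that $\bar{g}_{k,m}=\sum_{n\in\mathcal{D}_{k,m}}g_{k,n}$, so each partition of the $N_k$ subcarrier gains into $M_k$ groups of size $d_v$ yields one point $\mathbf{g}_k$, and the total sum $\sum_m \bar{g}_{k,m}=\sum_{n}g_{k,n}$ is a constant independent of the partition. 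Thus all feasible gain vectors live on the same hyperplane, and the combinatorial problem reduces to a pure majorization question: which grouping makes the vector of group-sums majorize all the others?

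First I would establish the reduction formally: by Theorem~\ref{ThrmRateSchur}, if the proposed partition $\{\mathcal{D}^\star_{k,m}\}$ yields a gain vector $\mathbf{g}^\star_k$ satisfying $\mathbf{g}^\star_k\succ\mathbf{g}_k$ for every other feasible partition's gain vector $\mathbf{g}_k$, then $R_k(\mathbf{g}^\star_k)\geq R_k(\mathbf{g}_k)$ for all feasible partitions, which proves optimality. So the entire task is to show that the greedy ``sort-and-consecutively-group'' rule in~\eqref{Eq_GreedyScPartition} --- which assigns the $d_v$ largest remaining gains to the first symbol, the next $d_v$ to the second, and so on --- produces the majorization-maximal element of the feasible set.

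The main step is the majorization argument itself. Since the group-sums all have the same total, majorization is characterized by the partial sums of the \emph{ordered} group-sums: $\mathbf{g}^\star_k\succ\mathbf{g}_k$ holds iff, after sorting both vectors in descending order, the top-$j$ partial sum of $\mathbf{g}^\star_k$ is at least that of $\mathbf{g}_k$ for every $j=1,\dots,M_k$. I would verify this by an exchange/greedy argument: the consecutive grouping of the sorted gains maximizes the sum of the $j$ largest group-sums for each $j$, because any partition whose top-$j$ groups do not consist of the $jd_v$ largest individual gains can be strictly improved (or left unchanged) by swapping a smaller element out of a top group for a larger element currently in a lower group. Formalizing this, the top-$j$ group-sums of the greedy partition equal $\sum_{i=1}^{jd_v}g_{k,\pi(i)}$, which is the largest possible sum of any $jd_v$ of the individual gains and hence dominates the top-$j$ partial sum of any competing partition. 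Running this over all $j$ gives the required majorization.

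The part I expect to be the main obstacle is the technical book-keeping in the exchange argument --- in particular, confirming that making one group's sum larger while another's smaller genuinely moves the \emph{sorted} group-sum vector up in the majorization order, rather than merely permuting entries, and handling the boundary condition $P_k<M_k^3-M_k^2-M_k$ inherited from Theorem~\ref{ThrmRateSchur} so that Schur-convexity actually applies at the relevant gain vectors. One must also check that the greedy vector is feasible (each group has exactly $d_v$ elements and the groups are disjoint), which is immediate from the construction in~\eqref{Eq_GreedyScPartition} since the index blocks $\{\hat m,\dots,\hat m+d_v-1\}$ partition $\{1,\dots,N_k\}$. Once the majorization-maximality is secured, the lemma follows directly from Theorem~\ref{ThrmRateSchur}.
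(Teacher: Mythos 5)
Your proposal is correct and follows essentially the same route as the paper's own proof: invoke Theorem~\ref{ThrmRateSchur} to reduce optimality to majorization of the symbol-gain vector, then show that the sorted consecutive grouping majorizes every other feasible partition because its top-$j$ group-sums equal $\sum_{i=1}^{jd_v}g_{k,\pi(i)}$, the maximum possible sum of any $jd_v$ gains. In fact your partial-sum justification (and your remark that the power condition $P_k<M_k^3-M_k^2-M_k$ must hold for the Schur-convexity step to apply) is spelled out more carefully than in the paper, which asserts the majorization inequality as ``clear'' from the two permutations.
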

\begin{IEEEproof}
See Appendix~\ref{App_SymMMVproof}
\end{IEEEproof}
In the proposed scheme, the subcarriers are sorted in descending order then the first best $d_v$ subcarriers are combined to create one symbol, the second best $d_v$ for another symbol and so on. This partitioning scheme results in symbols' gains vector $(\mathbf{g}_k)$ that majorize any other partitioning scheme. The proposed scheme will be compared with brute-force search in Section~V. For comparison purposes, another partitioning scheme will be included, which attempts to generate the least majorized vector (LMV) by making the symbols gains close to each other as much as possible. In this scheme, the subcarriers are sorted in descending order $g_{k,\pi(1)}\geq g_{k,\pi(2)}\geq\dots \geq g_{k,\pi(N_k)}$. Then, the subcarriers that have the best channel gains are combined with the subcarriers that have the least gains to create one LDS symbol. The LMV partitioning scheme can be defined as
\begin{equation}\label{Eq_PartitionSceme1}
\mathcal{D}_{k,m}=\left\{ \begin{array}{ll}
\{g_{k,\pi(m)},\ g_{k,\pi(N_k-m+1)}\}, & \text{for}\ d_v=2,\\
\\

\{g_{k,\pi(m)},\ g_{k,\pi(N_k-2m+1)}, & \\

g_{k,\pi(N_k-2m+2)}\}, & \text{for}\ d_v=3.
\end{array} \right.
\end{equation}
%%%%%%%%%%%%%%%%%%%%%%%%%%%%%%%%%%%%%%%%%%%%%%%%%%%%%%%%%%%%%%%%%%%%%%%%%%%%%
\section{Multiuser Radio Resource Allocation}\label{SectionFour}
In this section, the radio resource allocation for the MC-LDSMA in the multiuser case will be considered. Firstly, the subcarrier and power allocation problem for MC-LDSMA is formulated. The optimality conditions of power allocation for MC-LDSMA under relaxed conditions will be provided. This will shed light on the derivation of suboptimal subcarrier and power allocation algorithms.
\subsection{Problem Formulation}
To formulate the subcarrier and power allocation problem for MC-LDSMA system, let $x_{k,n}$ be the channel allocation index such that $x_{k,n}=1$ if subcarrier $n$ is allocated to user $k$ and $x_{k,n}=0$ otherwise. Consequently, for a given subcarrier partitioning, the weighted sum-rate maximization problem for MC-LDSMA system can be formulated as follows.\\
\textbf{Problem LDS}: \textit{Weighted sum-rate maximization for MC-LDSMA}
\begin{equation}\label{Eq_LDS_Obj}
\max_{x_{k,n},p_{k,n}} \sum_{k\in\mathcal{K}}w_{\pi(k)}\sum_{m\in\mathcal{M}_{\pi(k)}}R_{\pi(k),m},
\end{equation}
where
\begin{equation}
R_{\pi(k),m}=\log{\left(1+\left(\sum_{n\in\mathcal{D}_{\pi(k),m}}\sqrt{snr_{\pi(k),n}}\right)^2\right)},
\end{equation}
\begin{equation}
snr_{\pi(k),n}= \frac{p_{\pi(k),n}h_{\pi(k),n}}{\sigma^2W/N+\displaystyle\sum_{j=k+1}^K x_{\pi(j),n}p_{\pi(j),n}h_{\pi(j),n}},
\end{equation}
subject to
\begin{equation}
\sum_{m\in\mathcal{M}_k}\sum_{n\in\mathcal{D}_{k,m}} p_{k,n} \leq P_k, \quad \forall k\in\mathcal{K},
\end{equation}
\begin{equation}
p_{k,n}\geq 0, \quad \forall k\in\mathcal{K},\hspace{0.2cm} n\in\mathcal{N},
\end{equation}
\begin{equation}\label{Eq_UsersPerSubc}
\sum_{k\in\mathcal{K}} x_{k,n} \leq d_c, \quad \forall n\in\mathcal{N},
\end{equation}
\begin{equation}\label{Eq_Binar_Q}
x_{k,n}\in \{0,1\}, \quad\forall k\in\mathcal{K},\hspace{0.2cm} n\in\mathcal{N}.
\end{equation}
The constraints in~\eqref{Eq_UsersPerSubc} and~\eqref{Eq_Binar_Q} provide control on the number of users per subcarrier, which are referred to as the loading constraints. Note that there is an implicit spreading constraint, i.e. $|\mathcal{D}_{k,m}|>1$. The loading and spreading constraints distinguish the MC-LDSMA system from generic MAC. The \emph{Problem LDS} is non-convex for two reasons. Firstly, the binary constraint in~\eqref{Eq_Binar_Q} is a non-convex set. Secondly, the interference term in the objective function~\eqref{Eq_LDS_Obj} makes it a non-convex function. Unlike the case in OFDMA, the constraint in \eqref{Eq_Binar_Q} cannot be relaxed to take any real value in the interval $[0, 1]$ to make the problem tractable~\cite{NgSung_SPA_UL,HuangBerry_SPA_UL,EURASIP_OFDMA}. In OFDMA, when the constraint is relaxed the users are still orthogonal to each other. If the binary constraint is relaxed in LDS, all the users may interfere at each subcarrier, which violate the main concept of LDS that only $d_c$ users are allowed to interfere with each other at the same subcarrier. Furthermore, it can result in a situation when all the users transmit on all the subcarriers, leading to a fully connected graph rather than a low density one.
\subsection{Optimality Conditions for MC-LDSMA}
In this section, the optimality conditions for radio resource allocation in MC-LDSMA under relaxed conditions will be presented. The aim is to address the problem at a more fundamental level by finding the relationship between MC-LDSMA and the optimal multiple access scheme. Then the insights gained from the optimality conditions will be used to design suboptimal algorithms for subcarrier and power allocation in the next section. The spreading constraint is relaxed to find the solution to the problem that satisfies the other conditions, specifically, the loading constraints. Here, it will be shown that \emph{Problem LDS} (\ref{Eq_LDS_Obj}~-~\ref{Eq_Binar_Q}) can be solved by solving \emph{Problem MAC} (\ref{Eq_MACfs_Obj}~and~\ref{Eq_MACfs_PowerConst}) under specific conditions. To relax the spreading constraint, let $d_v=1$, and by allowing the channel gain to be continuous, the following relaxed problem is obtained.

\noindent \textbf{Problem Relaxed LDS:}
\begin{equation}\label{Eq_ReLDS_Obj}
\max_{x_k(f),\mathcal{P}_k(f)} \sum_{k\in\mathcal{K}}w_{\pi(k)}x_{\pi(k)}(f)R_{\pi(k)},
\end{equation}
where
\begin{multline}
R_{\pi(k)}=\\ \int_{-W/2}^{W/2} \log{\Bigg(1+\frac{\mathcal{P}_{\pi(k)}(f)h_{\pi(k)}(f)}{\sigma^2+\displaystyle\sum_{i=k+1}^K x_{\pi(i)}(f)\mathcal{P}_{\pi(i)}(f)h_{\pi(i)}(f)}\Bigg)df},
\end{multline}
subject to
\begin{equation}
\int_{-W/2}^{W/2} \mathcal{P}_k(f)df\leq P_k, \quad \forall k\in\mathcal{K},
\end{equation}
\begin{equation}
\mathcal{P}_k(f)\geq 0, \quad \forall k\in\mathcal{K},
\end{equation}
\begin{equation}\label{Eq_UsersPerSubcR}
\sum_{k\in\mathcal{K}} x_k(f) \leq d_c,
\end{equation}
\begin{equation}\label{Eq_Binar_QR}
x_k(f)\in \{0,1\}, \quad\forall k\in\mathcal{K}.
\end{equation}
In this new formulation, the problem is still non-convex due to the binary constraint~\eqref{Eq_Binar_QR}. However, this relaxation will allow us to satisfy the loading constraints~\eqref{Eq_UsersPerSubcR} and~\eqref{Eq_Binar_QR} without explicitly imposing them in the optimization problem as follows.
\begin{lem}\label{Lemma1}
The solution of \emph{Problem Relaxed LDS} (\ref{Eq_ReLDS_Obj}~-~\ref{Eq_Binar_QR}) can be obtained by solving \emph{Problem MAC} (\ref{Eq_MACfs_Obj}~and~\ref{Eq_MACfs_PowerConst}) with users grouped into $d_{c}$ groups and the users in each group, $\mathcal{B}_l, l=1, \dots, d_{c}$, are assigned the same weight
\begin{equation}\label{Eq_UsersGrouping}
w_k= \bar w_l,\quad \forall k\in\mathcal{B}_l,\quad l=1, \dots, d_{c}.
\end{equation}
\end{lem}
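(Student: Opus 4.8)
The plan is to exhibit \emph{Problem MAC} as a relaxation of \emph{Problem Relaxed LDS} and then to show that, under the stated grouping of the weights, the MAC optimizer is automatically loading-feasible, so it must also solve the more constrained problem. First I would note that dropping the loading constraint $\sum_k x_k(f)\leq d_c$ together with the gating variables from \emph{Problem Relaxed LDS} yields exactly \emph{Problem MAC}: at any optimum one may take $x_k(f)=1$ exactly when $\mathcal{P}_k(f)>0$, so the two problems differ only by the cardinality restriction on the set of users with positive power at each frequency. Consequently the MAC optimum upper-bounds the Relaxed LDS optimum, and any MAC-optimal power allocation that happens to satisfy $|\mathcal{S}_f|\leq d_c$ for almost every $f$ is necessarily optimal for \emph{Problem Relaxed LDS} as well.

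Next I would invoke the Tse--Hanly characterization recalled in Section~\ref{SectionTwo}: for fixed multipliers $\boldsymbol{\lambda}$ the MAC optimum is computed frequency-by-frequency through the utilities $u_k(q,f)=\frac{w_k}{\sigma^2+q}-\frac{\lambda_k}{h_k(f)}$, and the set of users receiving positive power at frequency $f$ is precisely the set of users that attain the upper envelope $u^\star(q,f)=[\max_{k}u_k(q,f)]^+$ for some residual-interference level $q\geq 0$. Controlling $|\mathcal{S}_f|$ therefore reduces to counting how many distinct users can appear on this envelope.

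The crux of the argument, and the step I expect to be the main obstacle, is to show that two users sharing the same weight can never both appear on the envelope. For $i,j\in\mathcal{B}_l$ we have $w_i=w_j=\bar w_l$, so the term $\frac{\bar w_l}{\sigma^2+q}$ cancels and $u_i(q,f)-u_j(q,f)=\frac{\lambda_j}{h_j(f)}-\frac{\lambda_i}{h_i(f)}$ is independent of $q$. Hence on the whole range $q\in[0,\infty)$ the sign of $u_i-u_j$ is constant: one user dominates the other at every interference level, so at most one member of each group ever attains the maximum and receives positive power. Exact ties $\frac{\lambda_i}{h_i(f)}=\frac{\lambda_j}{h_j(f)}$ occur only on a measure-zero set of frequencies because the channels are continuous, and may be broken arbitrarily without changing the integrated rate, consistent with the continuity caveat already noted for the MAC solution. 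The care here is in making the identification ``appears on the envelope $=$ positive power'' precise and in disposing of these ties.

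Finally I would tally the groups: with the users partitioned into $d_c$ groups of equal weight, at most one user per group is active at each $f$, giving $|\mathcal{S}_f|\leq d_c$ for almost every $f$. The MAC optimizer thus satisfies the loading constraints without their being imposed, and by the relaxation argument of the first paragraph it solves \emph{Problem Relaxed LDS}. The decoding order raises no difficulty: the MAC order is non-decreasing in the weights, and although the members of a group are tied in weight, only one member of each group is ever active, so any tie-breaking within a group is immaterial.
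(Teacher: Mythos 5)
Your proof is correct and follows essentially the same route as the paper's: the decisive step in both is that, for two users sharing a weight $\bar w_l$, the term $\bar w_l/(\sigma^2+q)$ cancels in the utility difference, so $u_i(q,f)-u_j(q,f)$ is independent of $q$ and only the user minimizing $\lambda_k/h_k(f)$ within each group can be selected at frequency $f$, giving $|\mathcal{S}_f|\le d_c$. Your extra framing --- stating explicitly that \emph{Problem MAC} is a relaxation of \emph{Problem Relaxed LDS} so that a loading-feasible MAC optimum is automatically optimal for the latter, and the measure-zero treatment of ties --- tightens details the paper leaves implicit, but it is not a different approach.
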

\begin{proof}
We have to prove that at each frequency, only one user from each group may be chosen for all the interference levels. With users grouping~\eqref{Eq_UsersGrouping}, the maximization problem in~\eqref{Eq_UtilityMax} can be divided into number of sub-problems, each for one group $\mathcal{B}_l$
\begin{equation}
u^*(q,f)=\max_l\Big(\max_{k\in\mathcal{B}_l}\;u_k(q,f)\Big).
\end{equation}
In each group of users $\mathcal{B}_l$, the term $\frac{\bar w_{l}}{\sigma^2+q}$ will be the same for all the users in the group and the user with the smallest $\frac{\lambda_k}{h_k(f)}$ will be dominant for each frequency~$f$

\begin{equation}
u_k(q,f)=\frac{\bar w_l}{\sigma^2+q}-\frac{\lambda_k}{h_k(f)},\quad \forall k\in\mathcal{B}_l,
\end{equation}
\begin{multline}
u_k(q,f) > u_j(q,f),\quad\text{for}\quad q\in[0,\infty),\quad if\\ \quad\frac{\lambda_k}{h_k(f)}<\frac{\lambda_j}{h_k(f)},\quad \forall j\neq k,\quad k, j\in\mathcal{B}_l.\nonumber
\end{multline}
Hence, for each frequency, only one user from each group will be selected, which is given by
\begin{equation}
k_l^*(f)=\arg\min_{k\in\mathcal{B}_l}\;\frac{\lambda_k}{h_k(f)}.
\end{equation}
Consequently, no more than $d_c$ users will share each frequency~$f$.
\end{proof}
The objective is that, by grouping the users into groups and allocating the same weight for the users in the same group, the users sharing each frequency will not exceed the number of groups. The complexity of the optimal solution lies in the computation of the Lagrange multipliers, which are computed by an iterative algorithm to satisfy the power constraints~\cite{Tse_Polymatroid}. Consequently, the complexity for solving Problem~MAC is still too high for practical cellular systems. Furthermore, partitioning of the available bandwidth in an optimal fashion is difficult to achieve practically. However, the optimal algorithm and Lemma~$1$ shed light on the structure of the optimal solution, which will be useful in designing suboptimal algorithms.
%%%%%%%%%%%%%%%%%%%%%%%%%%%%%%%%%%
\subsection{Suboptimal Algorithms}
As discussed in the previous section, finding the optimal solution to \emph{Problem LDS} (\ref{Eq_LDS_Obj}~-~\ref{Eq_Binar_Q}) is computationally intensive and impractical. Therefore, suboptimal subcarrier and power allocation algorithms with low complexity are presented here. We propose two suboptimal solutions that differ in the stage where the spreading is considered. In the first algorithm, the effect of spreading is taken into account from the outset, and in the following steps we consider: 
\begin{itemize}
\item Each user will partition the available subcarriers, $\mathcal{N}$, into $M$ subsets using the proposed partitioning scheme~\eqref{Eq_PartitionSceme1}.
\item Each user performs MRT-WF,~\eqref{Eq_MRTPowerAllocation}~and~\eqref{Eq_SymbolWaterFilling}, as single-user power allocation strategy.
\item The subcarriers are allocated on set basis. In other words, at each iteration of the algorithm, $d_v$ subcarriers are allocated to one user.
\end{itemize}
In the second algorithm, no spreading is assumed at the start and
\begin{itemize}
\item The subcarriers will not be partitioned at the first stage.
\item Single-user water-filling power allocation is used for each user.
\item The subcarriers are allocated individually, i.e., at each iteration of the algorithm one subcarrier is allocated to one user.
\item After the subcarrier and power allocation is finished, the subcarriers are partitioned using~\eqref{Eq_PartitionSceme1} and the spreading effect is taken into account for rate calculation.
\end{itemize}
As the first algorithm is based on MRT-WF, it is referred to it as MUMRT, where MU stands for multiuser. The second algorithm is referred to as MUWF, since it is based on water-filling power allocation.

For the MUMRT algorithm, let us define the utility of the $k$th user on the $m$th set of subcarriers, $\mathcal{D}_{k,m}$, as follows\\
\begin{multline}\label{Eq_UtilityMUMRT}
u_{k,m}=\\w_k \log{\left(1+\left(\sum_{n\in\mathcal{D}_{k,m}}\sqrt{\frac{h_{k,n}p_{k,n}}{\displaystyle\sum_{i\in\mathcal{I}_k}x_{i,n}h_{i,n}p_{i,n}+\sigma^2W/N}}\right)^2\right)},
\end{multline}
where $\mathcal{I}_k$ is the set of users that interfere on the $k$th user, and based on the optimal successive decoding, it is given by $\mathcal{I}_k=\{i:w_i>w_k\}$. In other words, each user will experience interference only from the users with higher weights. The suboptimal subcarrier and power allocation algorithm consists of two phases. In the first phase, the utilities of every user on each set of subcarriers are calculated. To this end, the MRT-WF single-user power allocation algorithm and partitioning scheme developed in the previous section are used. In the second phase, a set of subcarriers is allocated to one user based on the utilities calculated in the first phase. The algorithm iteratively allocates a set of subcarriers in each iteration. Analogous to the optimal algorithm, the allocated subcarriers are no longer available to the users from the same group.

Starting with zero interference, each user will calculate the utility on every set of subcarriers. Based on the calculated utilities, one user will be allocated one set of subcarriers. The interference is updated according to the subcarrier allocation and the users' utilities are recalculated. The algorithm iterate until all the users have zero-utility on all the unallocated subcarriers.

Two different subcarrier allocation criteria are considered for the second phase of the algorithm. The first one is allocating the set of subcarriers to the user that has the maximum utility value, $u_{k,m}$, and we refer to this subcarrier allocation criterion as SA$1$. This criterion is inspired by the fact that in the optimal algorithm, maximizing the main objective is decoupled into maximizing over every channel state. The second criterion, which is referred to as SA$2$, is to allocate a set of subcarriers to the user that achieve the maximum increase in the objective function~\eqref{Eq_LDS_Obj}
\begin{equation}
k^\star=\arg\max_k w_k (R_k-R_k^a),
\end{equation}
where $R_k^a$ is the rate of user $k$ using the subcarriers that already allocated to that user $(\mathcal{N}_k)$, and $R_k$ is the rate on all the subcarriers $(\mathcal{N}_k\cup \mathcal{N}_k^u)$ calculated at the power allocation phase.

The second algorithm, MUWF, has the same structure as the first algorithm but with different utility function which is defined at each subcarrier instead of set of subcarriers as follows
\begin{equation}\label{Eq_UtilityMUWF}
u_{k,n}=w_k \log{\left(1+\frac{h_{k,n}p_{k,n}}{\displaystyle\sum_{i\in\mathcal{I}_k}x_{i,n}h_{i,n}p_{i,n}+\sigma^2W/N}\right)}.
\end{equation}
Also, single-user water-filling is used for power allocation and one subcarrier is allocated in each iteration. After allocating all the subcarriers to the users, each user partitions the allocated subcarriers using the proposed partitioning scheme~\eqref{Eq_GreedyScPartition}, and the MRT-WF power allocation is implemented on the allocated subcarriers.
\begin{algorithm}[!t]
\small
\caption{\small Iterative Subcarrier and Power Allocation (MUMRT)}
\label{ALG_MUWF1}
\begin{algorithmic}[1]
\STATE \textbf{Users Grouping:} Set $w_k=\bar{w}_l,\quad\forall k\in\mathcal{B}_l,\quad l=1, \dots, d_c$.
\STATE \textbf{Initialization:} Put $\mathbf{J}=\textbf{0}$, $\mathcal{N}_k=\emptyset$ and $\mathcal{N}_k^u=\mathcal{N},\hspace{0.3cm}\forall k\in\mathcal{K}$.
\REPEAT
\STATE \textbf{Subcarriers Partitioning:} Partition the subcarriers using~\eqref{Eq_GreedyScPartition}:\\
$\mathcal{N}_k\rightarrow\mathcal{M}_k,\ \text{and}\ \mathcal{N}_k^u\rightarrow\mathcal{M}_k^u,\quad \forall k\in\mathcal{K}$.
\STATE \textbf{Power Allocation:} Considering the interference from other users, $\textbf{J}$, each user performs MRT-WF over $(\mathcal{N}_k\cup \mathcal{N}_k^u)$.
\STATE \textbf{Subcarrier Allocation (SA):} Calculate $u_{k,m}$ using~\eqref{Eq_UtilityMUMRT}$\ \forall k$:\\
\textbf{Option 1 (SA1):} Choose the pair such that:\\ $(k^\star,m^\star)=\arg\displaystyle\max_{k\in\mathcal{K},m\in\mathcal{M}_k^u} u_{k,m}$.\\
\textbf{Option 2 (SA2):} Perform power allocation for all $k\in\mathcal{K},\ n\in\mathcal{N}_k$ and calculate $R^a_k$. Choose the pair $(k^\star,m^\star)$ such that:\\
$k^\star=\arg\displaystyle\max_k w_k (R_k-R_k^a)$ and $m^\star=\arg\displaystyle\max_{m\in\mathcal{M}_{k^\star}^u} u_{k^\star,m}.$
\STATE Allocate the subcarriers subset $\mathcal{D}_{k^\star, m^\star}$ to user $k^\star$:
\\Set $x_{k^\star,n}=1,\quad\mathcal{N}_{k^\star}=\mathcal{N}_{k^\star}\cup \{n\},\quad \forall n\in\mathcal{D}_{k^\star, m^\star}$.
\STATE Remove the allocated subcarriers $(\mathcal{D}_{k^\star, m^\star})$ from the available subcarriers of the users' group $(\mathcal{B}_l)$ that includes $k^\star$:\\
$\mathcal{N}_k^u=\mathcal{N}_k^u\setminus{n},\quad\forall n\in\mathcal{D}_{k^\star, m^\star},\quad\forall k\in\mathcal{B}_l$.
\STATE Update the interference matrix elements $j_{k,n}\;\forall k\in\mathcal{I}_{k^\star}^c$,\\ $n\in\mathcal{D}_{k^\star, m^\star}$, where $\mathcal{I}_{k}^c=\mathcal{K}\setminus(\mathcal{I}_{k}\cup \{k\})$.
\UNTIL{$u_{k,m}=0,\quad\forall m\in\mathcal{M}_k^u$} or $\mathcal{M}_k^u=\emptyset,$ $\quad\forall k\in\mathcal{K}$.
\end{algorithmic}
\end{algorithm}
%%%%%%%%%%%%%%%%%%%%%%%%%%%%%%%%
\begin{algorithm}[!t]
\small
\caption{\small Iterative Subcarrier and Power Allocation (MUWF)}
\label{ALG_MUWF2}
\begin{algorithmic}[1]
\STATE \textbf{Users Grouping:} Set $w_k= \bar{w}_l,\quad \forall k\in\mathcal{B}_l,\quad l=1, \dots, d_c$.
\STATE \textbf{Initialization:} Put $\mathbf{J}=\textbf{0}$, $\mathcal{N}_k=\emptyset$ and $\mathcal{N}_k^u=\mathcal{N},\hspace{0.3cm}\forall k\in\mathcal{K}$.
\REPEAT
\STATE \textbf{Power Allocation:} Considering the interference from other users, $\textbf{J}$, each user performs water-filling over $\mathcal{N}_k\cup \mathcal{N}_k^u$.
\STATE \textbf{Subcarrier Allocation (SA):} Calculate $u_{k,n}$ using~\eqref{Eq_UtilityMUWF}$\ \forall k$:\\
\textbf{Option 1 (SA1):} Choose the pair $(k^\star,n^\star)$ such that:\\ $(k^\star,n^\star)=\arg\displaystyle\max_{k\in\mathcal{K},m\in\mathcal{N}_k^u} u_{k,n}$.\\
\textbf{Option 2 (SA2):} Perform power allocation for all $k\in\mathcal{K},\ n\in\mathcal{N}_k$ and calculate $R^a_k$. Choose the pair $(k^\star,n^\star)$ such that:\\
$k^\star=\arg\displaystyle\max_k w_k (R_k-R_k^a)$ and $n^\star=\arg\displaystyle\max_{m\in\mathcal{N}_{k^\star}^u} u_{k^\star,n}.$
\STATE Allocate subcarrier $n^\star$ to user $k^\star$:
\\Set $x_{k^\star,n^\star}=1, \quad \text{and} \quad\mathcal{N}_{k^\star}=\mathcal{N}_{k^\star}\cup \{n^\star\}$.
\STATE Remove the subcarrier $n^\star$ from the available subcarriers of the users' group $(\mathcal{B}_l)$ that includes $k^\star$:
$\mathcal{N}_k^u\hspace{-0.05cm}=\mathcal{N}_k^u\hspace{-0.02cm}\setminus\{n^\star\},\ \forall k\in\mathcal{B}_l$.
\STATE Update the interference matrix elements $j_{k,n^\star}\;\forall k\in\mathcal{I}_{k^\star}^c$,
where $\mathcal{I}_{k}^c=\mathcal{K}\setminus(\mathcal{I}_{k}\cup \{k\})$.
\UNTIL{$u_{k,n}=0,\quad\forall n\in\mathcal{N}_k^u$} or $\mathcal{N}_k^u=\emptyset,$ $\quad\forall k\in\mathcal{K}$.
\STATE \textbf{Final power allocation}: Partition the subcarriers using~\eqref{Eq_GreedyScPartition} and implement MRT-WF for each user on the subcarriers allocated to it.
\end{algorithmic}
\end{algorithm}

The proposed subcarrier and power allocation algorithms, MUMRT and MUWF, are summarized in Algorithm~\ref{ALG_MUWF1}~and~\ref{ALG_MUWF2}, respectively. In Algorithm~\ref{ALG_MUWF1}~and~\ref{ALG_MUWF2}, the notation $\mathcal{N}_k^u$ represents the set of subcarriers that are not allocated to user $k$ or any user in its group. $\mathcal{M}_k^u$ is the set of symbols after subcarriers partitioning of $\mathcal{N}_k^u$. The interference matrix, $\mathbf{J}\in\mathcal{R}_{+}^{K\times N}$, indicates the interference every user will experience on each subcarrier from other users, where $j_{k,n}=\sum_{i\in\mathcal{I}_k}x_{i,n}h_{i,n}p_{i,n}$. It is worth mentioning that the algorithms have a deterministic number of iterations to allocate all the subcarriers for a given $d_c$ and $d_v$. The first algorithm, as it allocates $d_v$ subcarriers in each iteration, requires $N d_c/d_v$ maximum number of iterations to allocate all the available subcarriers. On the other hand, by allocating one subcarrier in each iteration, the second algorithm requires $N d_c$ maximum number of iterations.
%=============== Section 5 ===============%
\section{Numerical Results}\label{SectionFive}
\begin{figure}[!t]
\centering
\includegraphics[width=8.5cm]{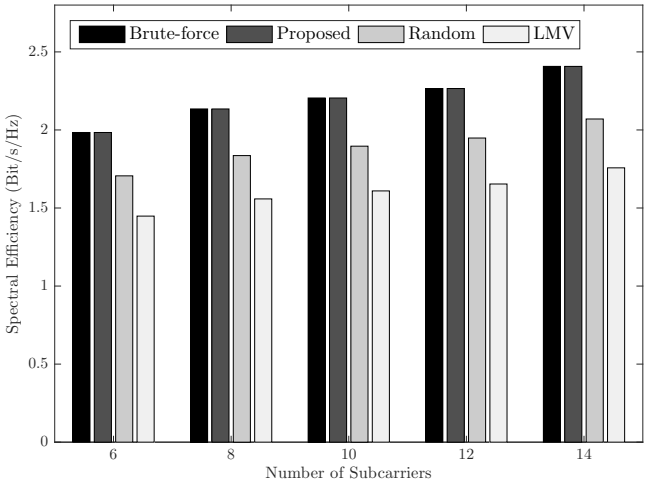}
\caption{Spectral efficiency comparison for different subcarrier partitioning schemes (with MRT-WF and $d_v=2$).}
\label{fig_PartitionMRTRes_dv2}
\end{figure}
\begin{figure}[!t]
\centering
\includegraphics[width=8.5cm]{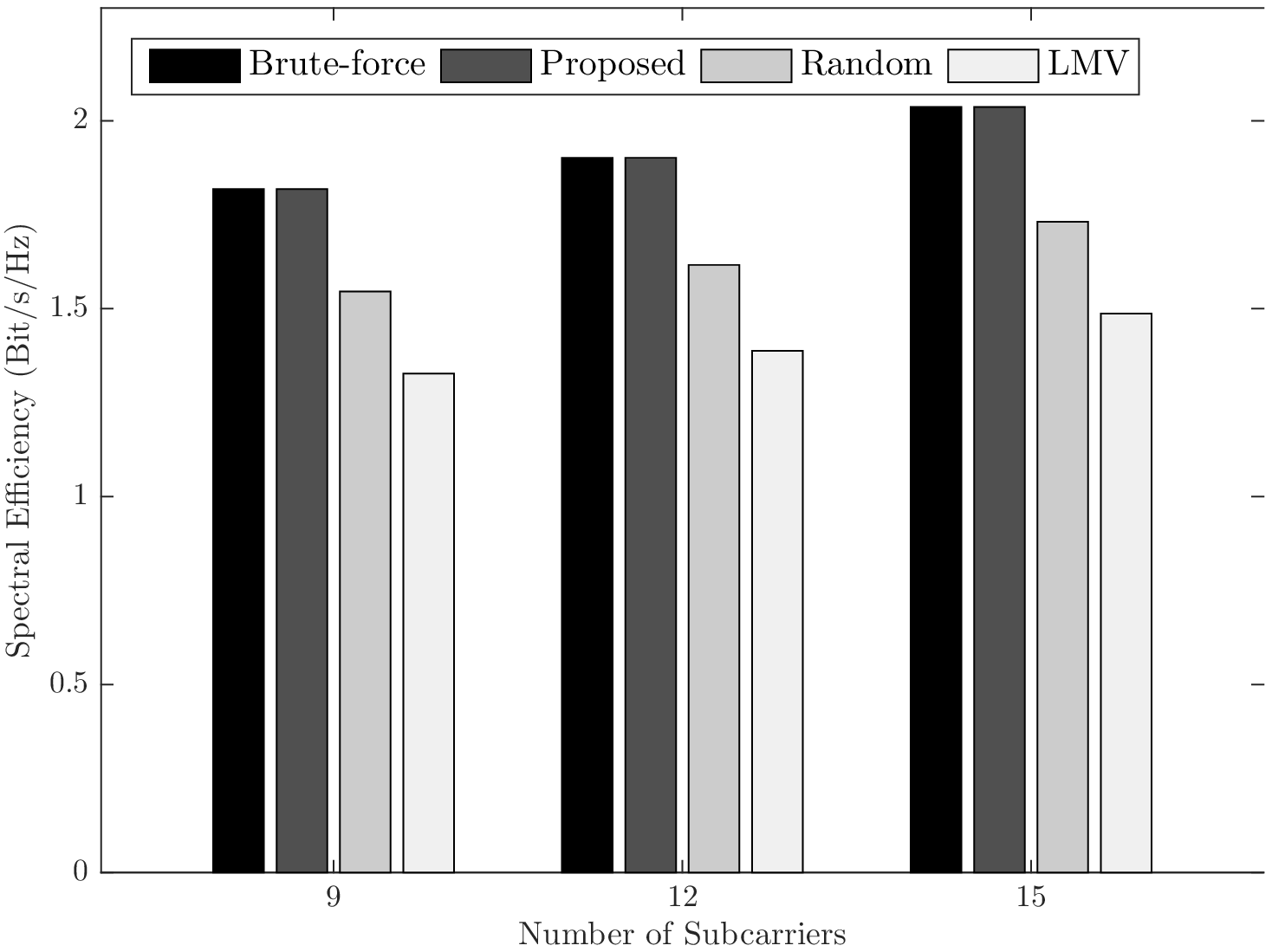}
\caption{Spectral efficiency comparison for different subcarrier partitioning schemes (with MRT-WF and $d_v=3$).}
\label{fig_PartitionMRTRes_dv3}
\end{figure}
\begin{figure}[!t]
\centering
\includegraphics[width=8.5cm]{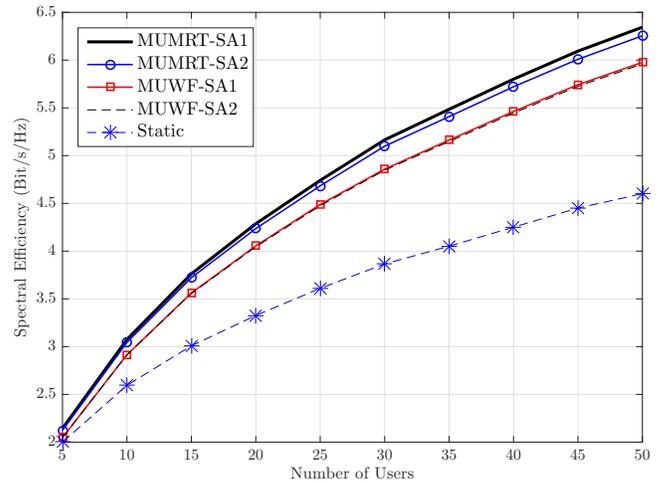}
\caption{Spectral efficiency comparison for the two algorithms with different subcarrier allocation criteria ($d_c=6$ and $d_v=2$).}
\label{fig_SE_MRTvsWF_SA1vsSA2}
\end{figure}
\begin{figure}[!t]
\centering
\includegraphics[width=8.5cm]{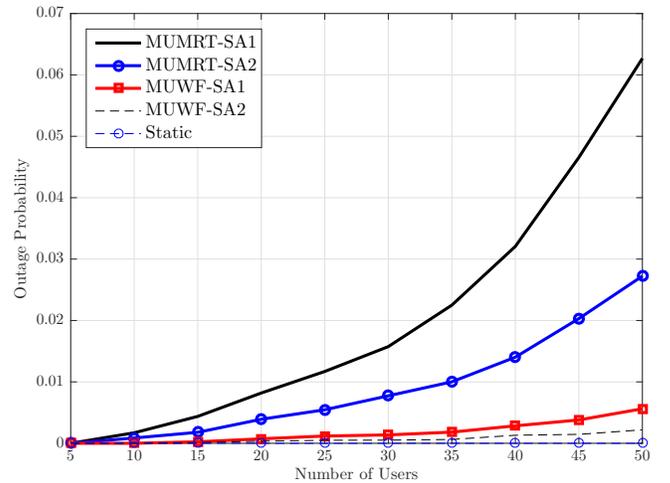}
\caption{Outage probability comparison for the two algorithms with different subcarrier allocation criteria ($d_c=6$ and $d_v=2$).}
\label{fig_OutageZero_MRTvsWF_SA1vsSA2}
\end{figure}

In this section, the performance of the proposed algorithms is evaluated through Monte Carlo simulation. A single hexagonal cell with $1$~km inradius is considered, and the users' locations are randomly generated and uniformly distributed over the cell. The maximum transmit power of each user is $23$~dBm and the system bandwidth is $5$~MHz consisting of $30$~subcarriers. The link gain between the base-station and a user is given as the product of path loss and fast fading effects. ITU pedestrian~B channel model~\cite{IMT_Guideline} is adopted for fast fading generation and the simplified model~\cite{WC_GS_Book} for the path loss. The noise power spectral density is assumed to be $-173$~dBm/Hz. Spectral efficiency and outage probability are used as the performance evaluation metrics. Taking into account fairness among users, the outage probability is defined as the number of users with zero rates divided by total number of users. To impose fairness in the system, users are grouped into $d_c$ groups based on their location from the base-station, and the group of users that are furthest from the base-station are assigned higher weight comparing to users that are closer to the base-station.

Starting with the single-user case, Figs.~\ref{fig_PartitionMRTRes_dv2} and~\ref{fig_PartitionMRTRes_dv3} show the user spectral efficiency for different subcarriers partitioning schemes with $d_v=2$ and~$3$, respectively. Due to the complexity of brute-force search, small numbers of subcarriers ($N$) have been considered in this case. As it can be observed from the figures, the proposed partitioning scheme achieves the same spectral efficiency as the brute-force search and significantly outperforms the random and LMV schemes. This provide a numerical proof on the optimality of the proposed scheme. The random and LMV schemes only achieve about $86\%$ and $73\%$ of the optimal solution, respectively, which indicates the sub-optimality of these schemes. Furthermore, the very poor performance of the LMV partitioning, demonstrate the importance of vector majorization in generating the symbols gains, where it shows that a less majorized gain vector will produce very low user rate.

\begin{figure}[!t]
\centering
\includegraphics[width=8.5cm]{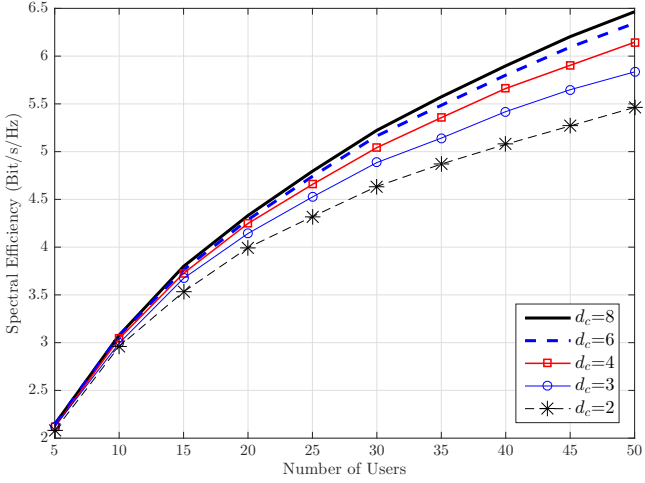}
\caption{Spectral efficiency comparison for different subcarrier loading $(d_c)$ (with MUMRT-SA1 and $d_v=2$).}
\label{fig_SE_MUMRT_SA1_diffDc}
\end{figure}
\begin{figure}[!t]
\centering
\includegraphics[width=8.5cm]{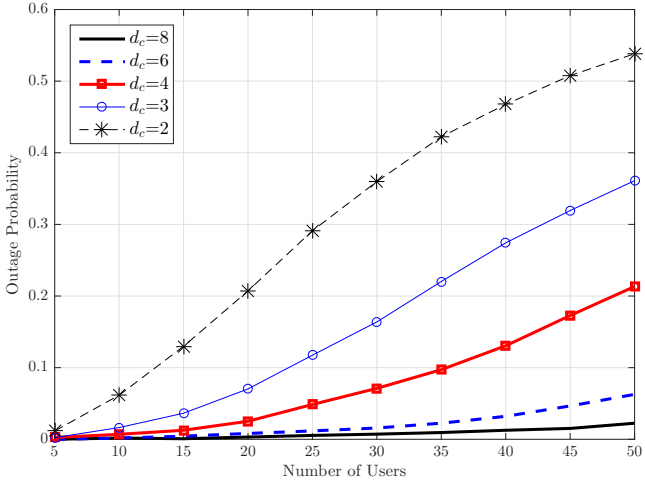}
\caption{Outage probability comparison for different subcarrier loading $(d_c)$ (with MUMRT-SA1 and $d_v=2$).}
\label{fig_OutageZero_MUMRT_SA1_diffDc}
\end{figure}
\begin{figure}[!t]
\centering
\includegraphics[width=8.5cm]{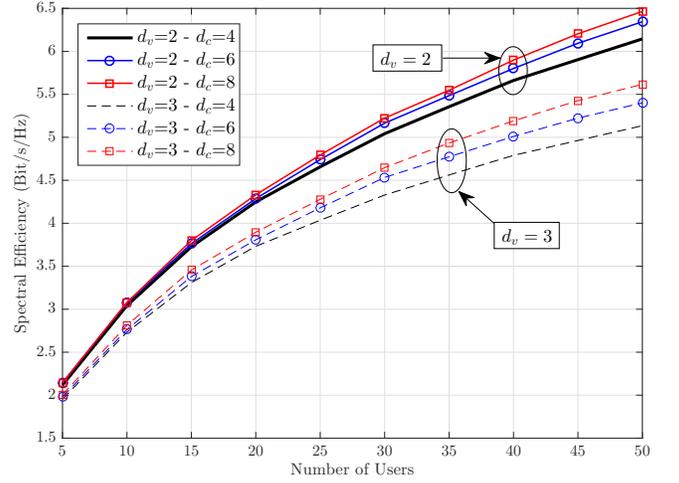}
\caption{Spectral efficiency comparison for different effective spreading factors (with MUMRT-SA$1$).}
\label{fig_Spectral_Eff_2}
\end{figure}
\begin{figure}[!t]
\centering
\includegraphics[width=8.5cm]{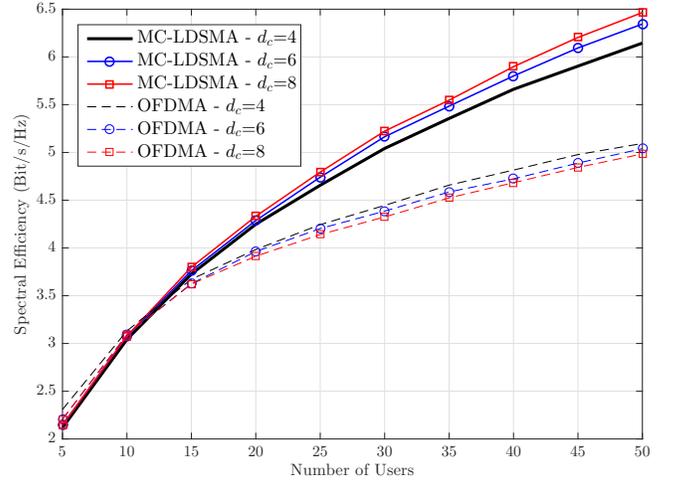}
\caption{Spectral efficiency comparison between MC-LDSMA (with MUMRT-SA$1$ and $d_v=2$) and OFDMA systems.}
\label{fig_Spectral_Eff_3}
\end{figure}

Figs.~\ref{fig_SE_MRTvsWF_SA1vsSA2}~and~\ref{fig_OutageZero_MRTvsWF_SA1vsSA2} present the spectral efficiency and outage probability, respectively, in the multiuser case, for the two proposed algorithms together with the static subcarrier and power allocation case. The figures show the difference between the two algorithms (MUMRT and MUWF) and the two subcarrier allocation criteria (SA$1$ and SA$2$). In the static case, the subcarriers are equally distributed among the users, and the power of each user is equally allocated on its subcarriers. It is evident from Fig.~\ref{fig_SE_MRTvsWF_SA1vsSA2} that both algorithms significantly improve the system spectral efficiency comparing to the static subcarrier and power allocation. Also, as the figure shows, the MUMRT algorithm achieves higher spectral efficiency compared with the MUWF algorithm. As shown in Fig.~\ref{fig_OutageZero_MRTvsWF_SA1vsSA2}, MUWF algorithm achieves less outage probability thanks to the capability of allocating individual subcarriers to the users. Furthermore, it can be observed from Fig.~\ref{fig_SE_MRTvsWF_SA1vsSA2} that the two subcarrier allocation criteria achieve almost the same spectral efficiency. The SA$1$ criterion has a marginal gain compared with SA$2$ in MUMRT algorithm. However, as it can be seen from Fig.~\ref{fig_OutageZero_MRTvsWF_SA1vsSA2}, SA$2$ provides better fairness among users compared to SA$1$ by achieving less outage probability. Consequently, it can be concluded that SA$1$ is more greedy in the resource allocation in comparison to SA$2$.

To demonstrate the effect of subcarrier loading, $d_c$, Figs.~\ref{fig_SE_MUMRT_SA1_diffDc}~and~\ref{fig_OutageZero_MUMRT_SA1_diffDc} show the spectral efficiency and outage probability, respectively, for different subcarrier loading values. It can be seen from the figures that increasing the subcarrier loading improves the system performance in terms of spectral efficiency and outage probability at the cost of increased receiver complexity. Increasing the subcarrier loading will allow subcarriers to be reused by more users, which will increase the spectral efficiency. Also, the number of users' groups will be increased, resulting in improved fairness to the system, which consequently improves the outage probability. Furthermore, it can be observed that the required subcarrier loading increases as the total number of users increase to maintain a desired system performance. As an example, when the total number of users is $50$ and the subcarrier loading are $2$ and $3$, the outage probabilities will be $54\%$ and $36\%$, respectively. Therefore, the subcarrier loading needs to be adapted to the system load (total number of users) to balance between the system performance and complexity.

In order to quantify the effect of the effective spreading factor $d_v$ on the system performance, Fig.~\ref{fig_Spectral_Eff_2} presents the spectral efficiency for $d_v=2$ and $d_v=3$. As is evident from the figure, increasing the effective spreading factor will decrease the spectral efficiency. Also, the subcarrier loading has more impact for higher spreading factors. This result may give indication that it is more spectral efficient to not spread and use an orthogonal scheme. The following discussion will clarify this issue by comparing between MC-LDSMA and orthogonal transmission.

\begin{figure}[!t]
\centering
\includegraphics[width=8.5cm]{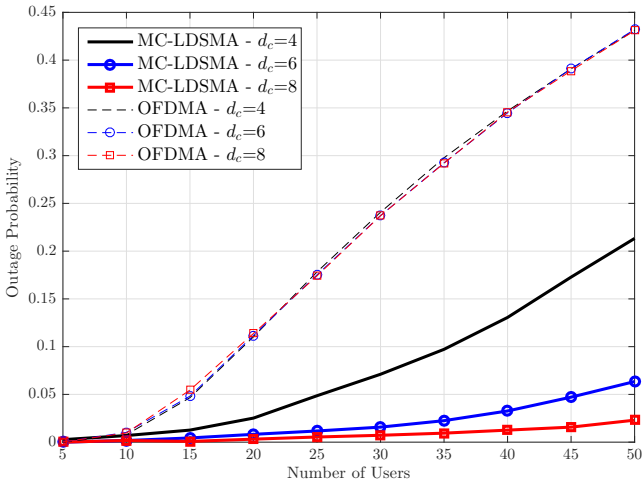}
\caption{Outage probability comparison between MC-LDSMA (with MUMRT-SA$1$ and $d_v=2$) and OFDMA systems.}
\label{fig_Outage_Pro_3}
\end{figure}
From system level perspective, spreading has advantage and disadvantage. The advantage of spreading is to allow the users share the subcarriers at the same time, which will increase the system capacity comparing to orthogonal transmission. On the other hand, spreading reduces the user's rate~\cite{Massey_ITspread}. Thus, it is essential to investigate if the gain of allowing the users to share the subcarriers is sufficient to compensate the loss due to spreading. To this end, as a non-orthogonal transmission, MC-LDSMA will be compared with OFDMA system, as an orthogonal transmission technique. For OFDMA system, the subcarrier and power allocation algorithm developed in~\cite{EURASIP_OFDMA} will be used.

Figs.~\ref{fig_Spectral_Eff_3}~and~\ref{fig_Outage_Pro_3} show the spectral efficiency and outage probability, respectively, for MC-LDSMA and OFDMA with different subcarrier loading $(d_c)$. For the OFDMA system, the $(d_c)$ value is just for the weight calculation to keep the comparison fair and it does not affect the subcarrier loading as it is always equal to $1$. As is clear from Fig.~\ref{fig_Spectral_Eff_3}, for very small number of users ($5$ and $10$), OFDMA slightly outperforms MC-LDSMA. However, for high number of users, MC-LDSMA significantly outperforms OFDMA. This due the fact that when the total number of users is small there is no need for reusing the subcarriers, and the available subcarriers are enough to support all the users.

On the other hand, when the total number of users is high, the subcarrier sharing is essential to support more users and achieve higher spectral efficiency. Furthermore, as can be observed from the figure, using higher fairness by increasing the number of users groups, results in a marginal reduction in the OFDMA spectral efficiency. This is because more fairness will be imposed and the subcarrier loading is not changed. In contrast, in MC-LDSMA, the increase in fairness is combined with an increase in the subcarrier loading, which improves the system performance. Also, as Fig.~\ref{fig_Outage_Pro_3} shows, MC-LDSMA is fairer comparing to OFDMA system, and the resulting outage probability is much less than OFDMA. As an example, when the total number of users is $50$, OFDMA has outage probability about $43\%$, while for MC-LDSMA $21\%,\ 6.3\% \text{ and } 2.3\%$ can be achieved with $d_c=4, 6 \text{ and } 8$, respectively.

From the results shown in Figs.~\ref{fig_Spectral_Eff_3} and~\ref{fig_Outage_Pro_3}, it can be concluded that low density spreading with the proposed radio resource allocation algorithms can significantly increase the system performance in terms of spectral efficiency and outage probability compared with orthogonal transmission techniques. These gains can be translated into an increase in the number of supported users within the cell and higher users' data rates.
%=============== Section 6 ===============%
\section{Conclusion}\label{SectionSix}
In this paper, we considered the problem of radio resource allocation for single-cell MC-LDSMA system in single-user and multiuser scenarios. For single-user radio resource allocation, an optimal power allocation has been derived, which consists of two steps: water-filling over the symbols and maximum ratio transmission over the subcarriers of each symbol. Also, we have shown that the optimal subcarriers partitioning scheme is the one that gives the most majorized symbols' gain vector. Numerical analysis revealed that the proposed partitioning scheme can significantly improve the user rate comparing to the conventional random partitioning. For multiuser radio resource allocation in MC-LDSMA, an analysis for the optimality conditions is provided and an optimal power allocation for a relaxed problem has been derived for continuous frequency selective channel. Using the structure of the optimal solution and the single-user power allocation, two suboptimal subcarrier and power allocation algorithms are proposed. The main difference between the two algorithms is the spreading effect: the first algorithm (MUMRT) takes into account the spreading from the onset of the allocation process and the subcarriers are allocated on subsets basis, whereas the second algorithm (MUWF) assumes no spreading and the subcarriers are allocated individually. The performance of the proposed algorithms have been evaluated by simulations, and results show that the proposed algorithms significantly improve the system spectral efficiency comparing to static subcarrier and power allocation. MUMRT algorithm has shown better performance comparing to the other one, and one subcarrier allocation (SA$1$) criterion has shown more greedy nature comparing to the other one (SA$2$). It is shown that there is a trade-off between the complexity (subcarrier loading) and the system performance, and the subcarrier loading can be chosen accordingly based on the system loading. Furthermore, we found that increasing the spreading factor to more than $2$ will decrease the system performance. Finally, a comparison between MC-LDSMA and OFDMA systems has been carried out and shown that MC-LDSMA can improve the system performance in terms of spectral efficiency and outage probability comparing to OFDMA.

%%%%%%%%%%%%%%%%%%%%%%%%%%%%%%%%%%%%%%%%%%%%%%%%%%%%%%%%%%%%%%%%%%%%%%%%%%%%%%%%%%%%%
\appendices
\section{Proof of Theorem~\ref{ThrmRateSchur}}\label{App_RateScurProof}
\begin{IEEEproof}
The proof of the theorem makes use of the properties of the elementary symmetric polynomials by using them as bridge to link between symbols gains majorization and the user rate. As the relation between the elementary symmetric polynomials and vector majorization is known, the only thing required is to show the connection between the user rate and the elementary symmetric polynomials. To this end, the user rate will be reformulated as a function of the elementary symmetric polynomials. The water-level in~\eqref{Eq_WaterLevel_1} can be rewritten as
\begin{equation}
\frac{1}{\lambda_k}=\frac{1}{M_k}\left(P_k+\frac{\displaystyle\sum_{m=1}^{M_k}\prod_{i=1\backslash m}^{M_k} g_{k,i}}{\displaystyle\prod_{m=1}^{M_k} g_{k,m}}\right),
\end{equation}
which can be further simplified using elementary symmetric polynomials as follows
\begin{equation}
\frac{1}{\lambda_k}=\frac{S_{M_k}(\mathbf{g}_k)P_k+S_{M_k-1}(\mathbf{g}_k)}{M_k S_{M_k}(\mathbf{g}_k)},
\end{equation}
where $S_i(.)$ is the $i$th elementary symmetric polynomial and given by~\cite{SymmetricFunBook}
\begin{equation}
S_{M_k}(\mathbf{g}_k)=\prod_{m=1}^{M_k} g_{k,m},
\end{equation}
\begin{equation}
S_{M_k-1}(\mathbf{g}_k)=\sum_{m=1}^{M_k}\prod_{i=1\backslash m}^{M_k} g_{k,i}.
\end{equation}
Consequently, the user rate~\eqref{Eq_SymbolsRate_2} will be
\begin{eqnarray}\label{Eq_SymbolsRate_3}
R_k(\mathbf{g}_k)&=& \sum_{m\in\mathcal{M}_k}\log{\left(\frac{\bar{g}_{k,m}}{\lambda_k}\right)}=\log{\left(\prod_{m=1}^{M_k}\frac{\bar{g}_{k,m}}{\lambda_k}\right)}\nonumber\\
&=& \log{\left(\frac{1}{\lambda_k^{M_k}}S_{M_k}(\mathbf{g}_k)\right)}\nonumber\\
&=& M_k\log{\left(\frac{1}{\lambda_k}\right)}+\log{\left(S_{M_k}(\mathbf{g}_k)\right)}\nonumber\\
&=& M_k\log{\left(\frac{S_{M_k}(\mathbf{g}_k)P_k+S_{M_k-1}(\mathbf{g}_k)}{M_k S_{M_k}(\mathbf{g}_k)}\right)}\nonumber\\
&&+\log{\left(S_{M_k}(\mathbf{g}_k)\right)}\nonumber\\
&=& M_k\log{\Big(S_{M_k}(\mathbf{g}_k)P_k+S_{M_k-1}(\mathbf{g}_k)\Big)}\nonumber\\
&&-(M_k-1)\log{\left(S_{M_k}(\mathbf{g}_k)\right)}-M_k\log{(M_k)}.\nonumber\\
\end{eqnarray}
Both $S_{M_k}(\mathbf{g}_k)$ and $S_{M_k-1}(\mathbf{g}_k)$ are Schur-concave functions~\cite[Chapter~3]{Major_Marshall_Book}, and will be decreased by majorizing $\mathbf{g}_k$.
Now, the change in the user rate with respect to $S_{M_k}(\mathbf{g}_k)$ and $S_{M_k-1}(\mathbf{g}_k)$ can be found through the partial derivative of the user rate as follows
\begin{equation}\label{Eq_RateDerivative1}
\frac{\partial R_k(\mathbf{g}_k)}{\partial S_{M_k}}=\frac{S_{M_k}(\mathbf{g}_k)P_k-(M_k-1)S_{M_k-1}(\mathbf{g}_k)}{S^2_{M_k}(\mathbf{g}_k)P_k+S_{M_k}(\mathbf{g}_k)S_{M_k-1}(\mathbf{g}_k)},
\end{equation}
\begin{equation}\label{Eq_RateDerivative2}
\frac{\partial R_k(\mathbf{g}_k)}{\partial S_{M_k-1}}=\frac{M_k}{S^2_{M_k}(\mathbf{g}_k)P_k+S_{M_k}(\mathbf{g}_k)S_{M_k-1}(\mathbf{g}_k)}.
\end{equation}
Consequently, the user rate will be a decreasing function of majorizing $\mathbf{g}_k$ if the total change with respect to $S_{M_k}(\mathbf{g}_k)$ and $S_{M_k-1}(\mathbf{g}_k)$ is negative, i.e.
\begin{equation}
\frac{\partial R_k(\mathbf{g}_k)}{\partial S_{M_k}}+\frac{\partial R_k(\mathbf{g}_k)}{\partial S_{M_k-1}}<0,
\end{equation} 
which is satisfied if the following condition holds
\begin{equation}
P_k<(M_k-1)\frac{S_{M_k-1}(\mathbf{g}_k)}{S_{M_k}(\mathbf{g}_k)}-M_k.
\end{equation}
In fact, $\frac{S_{M_k-1}(\mathbf{g}_k)}{S_{M_k}(\mathbf{g}_k)}$ is a Schur-convex function of $\mathbf{g}_k$~\cite[Chapter~3]{Major_Marshall_Book}, hence, the minimum its achieved when all the elements in $\mathbf{g}_k$ are equal, and the upper bound on the user power will be
\begin{equation}\label{Eq_PowerBound}
P_k<(M_k-1)\frac{\frac{M_k}{M_k^{M_k-1}}}{\frac{1}{M_k^{M_k}}}-M_k\quad \Longrightarrow\quad P_k<M_k^3-M_k^2-M_k.
\end{equation}
This implies that if the user total power $(P_k)$ is less than the bound in \eqref{Eq_PowerBound}, the user rate will be decreased with the increase in $S_{M_k}(\mathbf{g}_k)$ and $S_{M_k-1}(\mathbf{g}_k)$. Consequently, the user rate~\eqref{Eq_SymbolsRate_2} will be increased by majorizing $\mathbf{g}_k$.
\end{IEEEproof}
%%%%%%%%%%%%%%%%%%%%%%%%%%%%%%%%%%%%%%%%%%%%%%%%%%%%%%%%%%%%%%%%%%%%%%%%%%%%%%%%
\section{Proof of Lemma~\ref{Lem_SC_Partitiong}}\label{App_SymMMVproof}
\begin{IEEEproof}
The optimality proof of this scheme can be done by showing that the proposed partitioning scheme will give symbols gains vector that majorize any other partitioning scheme. Vectors majorization is defined as follows.
\begin{defn}\label{Def_Majorization}
\textbf{(Majorization of vectors~\cite[Chapter~1]{Major_Marshall_Book}):} For two vectors $\mathbf{x}, \mathbf{y}\in\mathcal{R}^{N_k}$ with descending order such that $x_1\geq x_2\geq\dots \geq x_{N_k}\geq 0$ and $y_1\geq y_2\geq\dots \geq y_{N_k}\geq 0$, $\mathbf{x}$ is said to be majorize $\mathbf{y}$ and written as $\mathbf{x}\succ\mathbf{y}$ if
\begin{displaymath}
\sum_{i=1}^n x_i \geq \sum_{i=1}^n y_i,\quad n=1,\dots, N_k-1\quad \text{and}\quad \sum_{i=1}^{N_k} x_i = \sum_{i=1}^{N_k} y_i.
\end{displaymath}
\end{defn}
Let $\pi$ represents the permutation of the subcarriers gains in descending order and $\rho$ is any other permutation. Based on these permutations it is clear that
\begin{equation}
\sum_{i=1}^m x_i \geq \sum_{i=1}^m y_i,\quad m=1,\dots, M_k-1\quad \text{and}\quad \sum_{i=1}^{M_k} x_i= \sum_{i=1}^{M_k} y_i,
\end{equation}
where $x_i=\{g_{k,\pi(i)}, g_{k,\pi(i+1)}, \dots,\ g_{k,\pi(i+d_v-1)}\}$ and $y_i=\{g_{k,\rho(i)}, g_{k,\rho(i+1)}, \dots,\ g_{k,\rho(i+d_v-1)}\}$. Consequently, vector $\mathbf{x}$ is majorizing vector $\mathbf{y}$ according to Definition~\ref{Def_Majorization}.
\end{IEEEproof}

\ifCLASSOPTIONcaptionsoff
\newpage
\fi

\bibliographystyle{IEEEtran}
\bibliography{IEEEabrv,References}

% biography section

\vspace{-1cm}
\begin{IEEEbiographynophoto}{Mohammed Al-Imari}
(S'10-M'14) received the B.Sc. degree in electronics and communication engineering (first rank) from the University of Technology, Baghdad, Iraq, in 2005; the M.Sc. degree (distinction) in wireless communication systems (first rank) from Brunel University, London, U.K., in 2008; and the Ph.D. degree in electronic engineering from the University of Surrey, Guildford, U.K., in 2013.

During 2013-2015, he was a Research Fellow with the Institute for Communication Systems/5G Innovation Centre, U.K. In October 2015, he joined Samsung Electronics R\&D Institute, Staines, U.K., as a 5G Researcher, where he actively contributes to the Horizon 2020 5G-PPP project FANTASTIC-5G. His research interests include information theory, multiple access techniques, and radio resource management in multicarrier communications.

Dr. Al-Imari received the IEEE Communications Prize for the Best Project from the Department of Electronic and Computer Engineering from Brunel University.
\end{IEEEbiographynophoto}

\vspace{-1cm}
\begin{IEEEbiographynophoto}{Muhammad Ali Imran}
(M'03-SM'12) received the M.Sc. (Distinction) and Ph.D. degrees from Imperial College London, U.K., in 2002 and 2007, respectively. He is currently a Reader (Associate Professor) in the Institute for Communication Systems (ICS - formerly known as CCSR) at the University of Surrey, UK and an Adjunct Associate Professor at University of Oklahoma, USA. He had a leading role in a number of multimillion international research projects encompassing the areas of energy efficiency, fundamental performance limits, sensor networks and self-organising cellular networks. He is also leading the new physical layer work area for 5G innovation centre and the curriculum development for the new Engineering for Health program at Surrey. He has a global collaborative research network spanning both academia and key industrial players in the field of wireless communications. He has supervised more than 20 successful PhD graduates. He has contributed to 10 patents and published over 200 peer-reviewed research papers including more than 20 IEEE Transactions. He has been awarded IEEE Comsoc's Fred Ellersick award 2014, FEPS Learning and Teaching award 2014 and twice nominated for Tony Jean's Inspirational Teaching award. He was a shortlisted finalist for The Wharton-QS Stars Awards 2014 for innovative teaching and VC's learning and teaching award in University of Surrey. He has acted as a guest editor for IET Signal Processing, IEEE Communications Magazine, IEEE Wireless Communication Magazine, IEEE Access. He is an associate Editor for IEEE Communications Letters, IEEE Access and IET Communications Journal. He is a senior member of IEEE and a Senior Fellow of Higher Education Academy (SFHEA) UK.
\end{IEEEbiographynophoto}

\begin{IEEEbiographynophoto}{Pei Xiao}
(SM'11) received the Ph.D. degree from Chalmers University of Technology, Sweden in 2004. Prior to joining the University of Surrey in 2011, he worked as a research fellow at Queen's University Belfast and had held positions at Nokia Networks in Finland. He is a Reader at University of Surrey and also the technical manager of 5G Innovation Centre (5GIC), leading and coordinating research activities in all the work areas in 5GIC (http://www.surrey.ac.uk/5gic/research). Dr Xiao's research interests and expertise span a wide range of areas in communications theory and signal processing for wireless communications.
\end{IEEEbiographynophoto}
\vfill
\end{document}